\documentclass{article}

\usepackage[utf8]{inputenc}
\usepackage[T1]{fontenc}
\usepackage{lmodern}
\usepackage{microtype}
\usepackage{amsmath,amssymb,amsthm}
\usepackage{graphicx}
\usepackage[margin=1.2in]{geometry}
\usepackage{algorithm}
\usepackage{algpseudocode}
\usepackage[hidelinks]{hyperref}
\usepackage[capitalise,noabbrev]{cleveref}
\usepackage{color}

\newcommand{\Aall}{A_{\mathrm{all}}}
\newcommand{\Queries}{Q}

\newcommand{\Feasy}{\mathcal{F}_{c\sqrt{n/\log n}}}
\newcommand{\Feasywarm}{\mathcal{F}_{c_0\sqrt{n}/\log n}}
\newcommand{\OPT}{\mathrm{OPT}}
\newcommand{\Type}{\mathrm{Type}}
\newcommand{\type}{\mathrm{type}}
\DeclareMathOperator*{\Ex}{\mathbb{E}}

\newtheorem{theorem}{Theorem}
\newtheorem{definition}{Definition}
\newtheorem{corollary}{Corollary}

\newtheorem{lemma}{Lemma}
\newtheorem{proposition}{Proposition}
\newtheorem{conjecture}{Conjecture}
\newtheorem{observation}{Observation}
\newtheorem{remark}{Remark}

\title{Collision Detection is Instance $\widetilde{\mathrm{O}}$ptimal\\
Under the Birthday Threshold}
\author{Omri Ben-Eliezer\thanks{Technion -- Israel Institute of Technology}
  \and Tomer Grossman\thanks{Weizmann Institute}
  \and V\'aclav Rozho\v{n}\thanks{Charles University}
  \and Jakub T\v{e}tek}
\date{}

\begin{document}

\maketitle

\begin{abstract}
Can structural knowledge about a hash function help accelerate the (black box) detection of collisions in it?
This question is fundamental to cryptography theory given the importance of collision-resistant hash functions, and in this paper we tackle it from the angle of \emph{instance optimality}, an ultimate notion of beyond worst case algorithm analysis that has gained significant traction in recent years. 
Instance optimality asks for a single algorithm that, on every input,
performs nearly as well as the best correct algorithm that ``knows the structure'' of that specific
input. Here we measure algorithms by the number of queries they make to the hash function $f \colon [n] \to [n]$, and we say that an algorithm ``knows the structure'' of the input if, in addition to query access to $f$, the structure-aware algorithm has free access to an unlabeled copy $\pi^{-1} \circ f \circ \pi$ of $f$, for an unknown permutation $\pi$ on $[n]$. 

We prove the existence of an (almost) instance-optimal algorithm for collision detection in the regime most interesting from a cryptographic perspective: among functions where finding a collision takes significantly less than $\sqrt{n}$ queries.
Specifically, we prove the existence of a single algorithm $A$ that, for any input $f$ in which a structure-aware algorithm can find a collision using $q \leq O(\sqrt{n/\log n})$ queries in expectation, $A$ can find a collision in at most $O(q \log n)$ queries. The $O(\log n)$ multiplicative overhead is tight, matching a lower bound of Ben-Eliezer, Grossman, and Naor [ICALP'25], and partially resolving their main open question. Our result implies, in particular, that it is impossible for a cryptographic designer to plant purely structural backdoors for collision finding (for this unlabeled notion of structure): whatever collisions the designer's secret
knowledge can reach, the public reaches at a multiplicative price of $O(\log n)$,
without ever knowing whether a backdoor was planted.
\end{abstract}

\section{Introduction}
\label{sec:intro}

Suppose that we are given classical, black-box query access to a function
$f\colon [n] \to [n]$, and want to find a collision: distinct
$x,y\in [n]$ such that $f(x)=f(y)$. For a uniformly random function,
$\Theta(\sqrt n)$ queries are necessary and sufficient. This is known as the
birthday bound; a family $\mathcal{H}$ of hash functions is considered collision-resistant if one cannot find a collision in a function sampled from $\mathcal{H}$ much faster than the birthday bound. The design of collision-resistant families of hash functions is a cornerstone task in cryptography \cite{Damgard87,RogawayShrimpton04}.

Now suppose that an algorithm is given -- for free -- the exact structure of $f$ up to
a relabeling of its vertices, in addition to simple black box query access to $f$. Such an algorithm knows exactly the arrangement and lengths of the
paths and cycles in $f$, but does not know which label corresponds to each location in this unlabeled picture. Can this information lead to a substantially faster
collision-finding algorithm?
This is the question
of \emph{instance optimality}, a notion introduced by Fagin, Lotem, and
Naor \cite{FaginLN03}. An algorithm is $\alpha$-instance optimal if, on
every input, its cost is at most $\alpha$ times that of any correct
algorithm on the same input. Where such a guarantee is attainable, it is
the strongest kind of beyond-worst-case analysis: no other algorithm can
improve on it by more than the factor $\alpha$, even on a restricted
family of inputs.

In the query model, the comparison must be formulated with some care. An
algorithm designed with full knowledge of the input can output the answer
without making a query, and no algorithm can compete with that benchmark.
The benchmark should therefore know a great deal about the input, but not
everything. In \emph{unlabeled instance optimality}, it knows the input up
to a relabeling of the ground set. It may exploit the structure of the
input, but it does not know the labels through which that structure can be
queried. This model was introduced by Grossman, Komargodski, and Naor
\cite{GrossmanKN20} in the decision-tree setting and was subsequently
studied for search problems over functions and graphs by Ben-Eliezer,
Grossman, and Naor \cite{BGN24}.

For collision detection, \cite{BGN24} showed that structural
information -- in the form of an unlabeled copy, as described above -- can help by a factor of $\Omega(\log n)$. They conjectured
that this is the largest possible advantage.
We prove the conjecture for functions whose structure-aware query
complexity is at most $c\sqrt{n/\log n}$, for an absolute constant
$c>0$. 

\subsection{Unlabeled instance optimality}
\label{subsec:instance-opt}

We first describe the comparison formally. Throughout the paper,
\emph{relabeling} means relabeling the vertices of the functional graph.\footnote{By \emph{functional graph} of a function $f \colon D \to D$, we mean the directed graph with out-degree $1$, where the single out-edge from $x$ is directed toward $f(x)$.}
For a permutation $\pi \in S_n$, write
\[
        f^\pi:=\pi^{-1}\circ f\circ\pi.
\]
Thus $f$ and $f^\pi$ have isomorphic functional graphs. An algorithm may
be tailored to the class $\{f^\pi:\pi\in S_n\}$, but it does not know
which labeling it is queried on.

We use the framework of unlabeled instance optimality
\cite{GrossmanKN20,BGN24}. All algorithms are Las Vegas: they always
return the correct answer, while their number of queries is random. For
an algorithm $A$, let $\Queries_A(f)$ denote its expected number of
queries on $f$.

\begin{definition}[Unlabeled instance optimality; \cite{GrossmanKN20},
see also \cite{BGN24}]
\label{def:inst_opt_inf}
Let $\mathcal F$ be a class of functions $f\colon[n]\to[n]$. A Las
Vegas algorithm $A$ for collision detection is
$\alpha(n)$-\emph{instance optimal with respect to $\mathcal F$} if,
for every $f\in\mathcal F$ and every Las Vegas algorithm $A'$ for
collision detection,
\[
   \Queries_A(f)
   \le
   \alpha(n)\cdot
   \max_{\pi\in S_n}\Queries_{A'}(f^\pi).
\]
Here $A'$ must be correct on every function $[n]\to[n]$, including
functions outside $\mathcal F$. Collision detection is
$\alpha(n)$-instance optimal with respect to $\mathcal F$ if it admits
such an algorithm.
\end{definition}

The order of quantifiers is important. The algorithm $A$ is fixed. Its
competitor $A'$ may be chosen specifically for the structure of $f$, but
it must remain correct when that structural hint is wrong. The maximum
over $\pi$ charges $A'$ for the worst labeling of the given structure.

Ben-Eliezer, Grossman, and Naor proved that collision detection is not
$o(\log n)$-instance optimal and conjectured that the logarithmic gap is
tight.

\begin{conjecture}[{\cite[Conjecture 1.5]{BGN24}}]
\label{conjecture:BGN24}
Collision detection admits an $O(\log n)$-instance optimal algorithm.
\end{conjecture}

It was specifically conjectured in \cite{BGN24} that the $O(\log n)$ instance optimality is attained by the \emph{all-scales algorithm}, defined as follows. Maintain one walker for
each $i=0,1,\ldots,\lceil\log n\rceil$. A walker at scale $i$ starts
at a uniformly chosen vertex and follows $f$ for at most $2^i$ steps,
or until its walk repeats a vertex. All walkers run in parallel, and the
algorithm stops when the queried edges contain two edges with the same
head -- i.e., a collision. We analyze a slight variant of this algorithm, in which the walkers use a common random
ordering of the starting vertices and run in epochs of doubling length.
The full definition appears in \cref{alg:allscales}, and the difference
from the original variant is discussed in \cref{rem:variant}.

\subsection{Our result}
\label{subsec:main-result}

For $f\colon[n]\to[n]$, define its structure-aware query complexity by
\[
   \OPT(f):=
   \inf_{A'}\max_{\pi\in S_n}\Queries_{A'}(f^\pi),
\]
where the infimum ranges over all Las Vegas algorithms for collision
detection. For $q=q(n)$, let
\[
   \mathcal F_q:=\{f\colon[n]\to[n]:\OPT(f)\le q\}.
\]

\begin{theorem}
\label{thm:main}
There is an absolute constant $c>0$ such that the all-scales algorithm
 (\cref{alg:allscales}) is $O(\log n)$-instance optimal with respect to
$\Feasy$.
\end{theorem}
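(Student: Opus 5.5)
The plan is to sandwich $\OPT(f)$ between a structural quantity of the functional graph of $f$, and then show that the all-scales algorithm runs within $O(\log n)$ of that same quantity.

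For a scale $t$, let $p_t(f)$ be the probability that a walk of length $t$ from a uniformly random start meets a vertex of in-degree at least $2$ other than through its own cycle closure. Equivalently, the walk contains an edge $x\to f(x)$ such that $f(x)$ has another preimage. Cycle closures with no other entering edge must be excluded, since a pure cycle gives no collision. The proxy is
\[
   T(f):=\min_{t\ge 1}\frac{t}{p_t(f)}.
\]
Up to constants, this is the expected cost of finding a "hit" by repeated independent walks of length $t$. A single hit does not yet give a collision, because the second preimage of the hit vertex is still unknown. So the proxy has to be refined. It should count, at scale $t$, the probability that two walks of length $t$ started from $k$ random vertices share a head. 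By birthday-type counting this is governed by $k^2$ times the sum over vertices $v$ of $(\text{mass of walks reaching } v)^2$. I will take $T(f)$ to be the minimum, over scales $t$ and numbers $k$, of the total cost $kt$ subject to this collision probability being a constant.

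\textbf{Step 1 (upper bound).} The all-scales walker at scale $i$ with $2^i\approx t$ receives a $1/O(\log n)$ fraction of the queries. Because it runs in doubling epochs over a common random ordering of starts, after $O(T(f)\log n)$ total queries it has performed about $k$ walks of length $t$. A second-moment (Paley–Zygmund) argument then shows that two of these walks collide with constant probability. Summing over epochs geometrically gives expected cost $O(T(f)\log n)$.

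\textbf{Step 2 (lower bound).} Here I show $\OPT(f)\ge \Omega(T(f))$ whenever $T(f)\lesssim c\sqrt{n/\log n}$. I would use Yao's principle: take a uniformly random $\pi$ and an algorithm $A'$ that knows the unlabeled graph. Each query of $A'$ is either a \emph{fresh} vertex, which is uniformly random among the unseen labels, or a continuation along an already discovered path. Since $A'$ must be correct on every function, it must exhibit an actual collision, meaning two distinct queried preimages of one head. The queried set is therefore a union of $k$ path segments with random starting points. The key claim is that any adaptive strategy for choosing how far to extend each segment is dominated, up to constant factors, by the non-adaptive choice "all segments of length $t$ for a single $t$", after paying a $\log$ factor that is absorbed by the $O(\log n)$ slack.

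The condition $q\le c\sqrt{n/\log n}$ enters through the fresh starts. If the number of fresh starts is below the birthday threshold, then with good probability fresh starts do not land on one another's segments by chance. Collisions therefore arise only from the structural merging counted in $T(f)$, and "guessing" a collision by querying two random vertices that happen to share an image has negligible success probability. The extra $\log n$ in the threshold should come from a union bound over scales, or from requiring high-probability concentration per scale.

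\textbf{Main obstacle.} The hardest part is the adaptivity reduction in Step 2. A clever $A'$ could extend walks unevenly: many short probes, then long extensions only on probes that landed in large in-trees, whose size it can infer from in-degree patterns it knows. To handle this, I would first try a charging argument. Assign each found collision to the dyadic scale of the segment length at which it occurred, and show that the probability mass reaching a given merge point through segments of length at most $2^i$ is bounded by the scale-$2^i$ quantity. This lets each dyadic scale of $A'$ be simulated by the corresponding all-scales walker.

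If that fails, the fallback is to strengthen the proxy $T(f)$ to allow a mixture over scales. That makes the lower bound easier, and the upper bound still holds because all-scales runs every scale simultaneously at a $\log n$ overhead.
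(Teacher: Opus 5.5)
\textbf{There is a genuine gap.} Step 2 carries the whole theorem, and it is asserted rather than proved. The ``adaptivity reduction'' you flag as the main obstacle is exactly the part that needs an argument. As formulated, it also cannot succeed, for three reasons.

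First, your refined proxy counts only collisions between two \emph{different} walks. It therefore misses the self-collision of a single $\rho$-shaped walk with a nonempty tail. Take $f$ to be a disjoint union of $n/3$ gadgets $a\to b\to c\to b$. A walk of length $3$ from a uniform start finds the collision at $b$ with probability $1/3$, so $\OPT(f)=O(1)$. But $k$ walks share a head only with probability $O(k^2/n)$, so $T(f)=\Theta(\sqrt n)$, and $\OPT(f)\ge\Omega(T(f))$ fails.

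Second, even after that fix, your key claim is too strong. You assert that every adaptive strategy is dominated, up to a constant, by walks of one common length $t$. That would show a single walker $W_i$ is $O(1)$-competitive on $\Feasy$. This is precisely the single-scale question the paper leaves open, and there is a concrete reason for doubt: a collision witness can consist of an edge deep in a long walk and an edge near the start of a short walk.

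Third, your hedge of paying a logarithm in Step 2 does not rescue the bound. Step 1 already spends the $O(\log n)$ on running all $\lceil\log n\rceil+1$ walkers in parallel, so the two losses multiply to $O(\log^2 n)$. The mixture-over-scales fallback is the right target. But then the lower bound against an adaptive competitor is still entirely missing. Your charging sketch does not say to which scale you charge a witness whose two edges lie at different depths on different roots. Nor does it say how a probabilistic proxy would be compared with an adaptive algorithm's success probability.

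\textbf{The missing idea and how the paper proceeds.} The idea you are missing dissolves your ``clever $A'$'' worry. With forward queries, and before any merge or collision, the competitor's explored graph is a union of unmerged paths, and it learns nothing about in-trees. After a uniformly random relabeling, all consistent embeddings of its transcript are equally likely (uniform-embedding lemma). So two open paths of the same length are exchangeable, and any competitor can be converted at no cost into an oldest-first one.

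The paper then uses no proxy at all, but a pathwise coupling:
\begin{itemize}
\item The competitor's fresh roots are drawn from the same random ordering $r_1,r_2,\ldots$ that every walker uses.
\item Boxes $B_i$ are the first $\lceil t/2^i\rceil$ roots with $\rho\ge 2^i$.
\item The good event $G$ says no other root lies within the first $2^{i+1}$ steps of the walk from any root in $B_i$.
\item On $G$, the oldest-first budget argument confines the competitor's walks to the boxes and keeps them disjoint.
\item Each edge the competitor finds at depth $\ell$ by time $t'$ is then found by $W_{\lceil\log\ell\rceil}$ by round $2t'$. The reason is that the competitor has paid at least $\min(\ell,\rho(r_{j'}))$ on every earlier root, while the walker pays $\min(2^{\lceil\log\ell\rceil},\rho(r_{j'}))\le 2\min(\ell,\rho(r_{j'}))$.
\end{itemize}
The two witness edges may be reproduced by different walkers. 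That is exactly where the $\log n$ is spent, once, and why the shared stream is essential.

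The $\sqrt{n/\log n}$ threshold comes from $\Pr[\neg G]=O(t^2\log t/n)$. The protected region has size $O(t\log t)$, about $2t$ per scale. Each root, exposed last against surrogate boxes fixed in advance, is uniform among the unused vertices.
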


Equivalently, for every $f\in\Feasy$ and every Las Vegas algorithm
$A'$ for collision detection,
\[
   \Queries_{\Aall}(f)
   \le O(\log n)\cdot
   \max_{\pi\in S_n}\Queries_{A'}(f^\pi).
\]
In the construction of \cite{BGN24} showing an $\Omega(\log n)$ separation, the expected query
complexity of the structure-aware algorithm is $n^\gamma$ for a fixed $\gamma<1/2$. It therefore lies in
$\Feasy$ for all sufficiently large $n$. Thus, the logarithmic factor in
\cref{thm:main} is tight up to a multiplicative constant.

We prove \cref{thm:main} in two stages. The coupling argument first gives
the same conclusion for $\Feasywarm$; this is \cref{thm:warmup}. A
second probability estimate removes a factor $\sqrt{\log n}$ from the
threshold. The coupling and all deterministic parts of the proof are the
same in both stages.

\paragraph{Cryptographic interpretation.} Collision resistance asks that no efficient adversary find a collision in a function drawn from a public family~\cite{Damgard87,RogawayShrimpton04}. For a random function the birthday bound $\Theta(\sqrt n)$ is optimal, and rho-style random walks attain it with small memory~\cite{Pollard75,vanOorschotW99}; the interesting case is a function that \emph{does} have structure---an atypical profile of cycle lengths, tail lengths or in-degrees---whether by accident of design or because a designer planted it. Relabeling captures exactly what such a designer knows and the public does not: the family $\{f^\pi : \pi\in S_n\}$ is the set of all functions with the shape of $f$, and an adversary who knows the shape but not the secret labeling $\pi$ is precisely a competitor $A'$ charged $\max_\pi Q_{A'}(f^\pi)$, as in Definition~\ref{def:inst_opt_inf}. In this language, Theorem~\ref{thm:main} says the following. If some shape-aware algorithm finds a collision in $f$ with $q\le c\sqrt{n/\log n}$ expected queries---that is, if the structural weakness is worth more than a $\sqrt{\log n}$ factor over the birthday bound---then the single public algorithm $A_{\mathrm{all}}$, given no information about $f$, finds one in $O(q\log n)$ expected queries; and since the behavior of $A_{\mathrm{all}}$ is invariant under relabeling, the same bound holds for every $f^\pi$ simultaneously. Purely structural knowledge of a hash function therefore cannot serve as a backdoor in the query model: whatever collisions the shape lets the designer reach below the birthday threshold, the public reaches at a multiplicative cost of $O(\log n)$, without knowing whether a weakness exists or at which scale it lives, and by the lower bound of~\cite{BGN24} no universal algorithm can do better. This complements the usual analysis of generic collision search, in which the function is modeled as a random oracle~\cite{BellareK04,FlajoletO90,vanOorschotW99}, and is in the spirit of Rogaway's ``human ignorance'' treatment of unkeyed hash functions~\cite{Rogaway06}, in that the guarantee is per instance and is witnessed by an explicit, fixed algorithm. We stress the limits of the statement: it concerns query complexity only; it bounds the advantage of structural (unlabeled) knowledge, not of a trapdoor tied to the labeling; and it applies below the birthday threshold, which is the only regime in which a speedup is meaningful.

\subsection{Technical overview}
\label{subsec:overview}

We next describe the main ideas of the proof, which has three parts. In the first part, we show that any algorithm for collision detection can be translated to a normal form of that algorithm, which follows a certain ``oldest first'' principle that will make our analysis easier later on. In the second part, we construct a probabilistic coupling of any normal-form algorithm with an instance of the all-scales algorithm (which is the candidate instance optimal algorithm), trying to match events where the algorithms decide to sample a new, previously unseen vertex. Finally, we bound the probability
that the coupling fails.

\paragraph{The normal form.}
Fix $f$ and a competing Las Vegas algorithm $A'$. We first apply a
uniformly random relabeling before simulating $A'$. The resulting
algorithm is label-oblivious, and its expected complexity on $f$ is at
most $\max_\pi\Queries_{A'}(f^\pi)$; see \cref{lem:relabel}.
Conditioned on the unlabeled transcript of such an algorithm, every
consistent embedding of that transcript in the functional graph of $f$
is equally likely (\cref{lem:uniform}).

Any algorithm can be seen as maintaining a set of walks (and cycles, which are walks that can no longer be extended). In each step, as long as no collision has been found, the algorithm can choose to either extend a path or sample a fresh, never-seen-before vertex. Suppose that no two walks merge through the algorithm's run (this no-merging requirement generally only holds below the birthday bound, and is the main reason our proof only applies for $f \in \Feasy$). Using a virtual simulation argument, we argue that if the algorithm chooses to extend a path of length $\ell > 0$ at any step, then it might as well pick the \emph{oldest} (i.e., earliest-born) path among its functional graph, and extend this path. We call this an ``oldest-first'' principle (\cref{lem:standard}). 

Thus, any competing algorithm can be viewed as a
scheduling rule that chooses in each round whether to start a new walk, and if not, which walk length to extend. We consider algorithms operating according to the oldest-first principle as normal-form algorithms.

\paragraph{The shared-root coupling.}
For convenience in the analysis, we may assume that all randomness in the process is generated in advance (but not known to the algorithms). In particular, let $R=(r_1,\ldots,r_n)$ be a uniformly random ordering of the
vertices. We can view any fresh-vertex
query of the competing algorithm as the operation of scanning $R$ until reaching (and picking) the first
vertex not already embedded in its transcript. Crucially, we will think of the all-scales algorithm as using the same random sequence; the two algorithms here are \emph{coupled}. 

Choose a normal-form competitor $A$ with expected query complexity
$q^*\le2\OPT(f)$. Let $t$ be the smallest number for which $A$ finds a
collision within $t$ queries with probability at least $9/10$.
Markov's inequality gives
\[
        t\le10q^*\le20\OPT(f).
\]

As mentioned, let $R=(r_1,\ldots,r_n)$ be the uniformly random ordering of the
vertices dictating the order of fresh vertices queried by the competing algorithm. Every walker (at every scale) of all-scales will use this ordering as well.

Assuming a ``good event'' defined
below, the scan of $R$ by the competing algorithm will never see an already-queried node, and so this algorithm will not make any skips among the first $t$ positions. Consequently
the $j$-th root of $A$ and the $j$-th root processed by every walker of all-scales are
the same vertex $r_j$, for all $1 \leq j \leq t$.

\paragraph{Boxes and the good event.}
For a vertex $u$, let $\rho(u)$ be the number of steps before the walk
from $u$ first repeats a vertex, and let
$\type(u)=\lfloor\log\rho(u)\rfloor$. For every scale $i$, define a
box $B_i$ containing the first roughly $t/2^i$ positions $j$ for which
$\type(r_j)\ge i$. The oldest-first rule gives a budget bound: within
$t$ queries, the competitor can explore to depth $2^i$ only from roots
whose positions lie in $B_i$. Before pushing a later walk that far, it
must first push every older open walk of the same length, unless an older
walk has already closed and paid its full cost.

For every $j\in B_i$, ``protect'' the first $2^{i+1}$ vertices of the walk
from $r_j$. The number of protected vertex occurrences is
\[
   \sum_i O(t/2^i)\cdot O(2^i)=O(t\log t).
\]
The \emph{good event} $G$ is that none of the other roots $r_1,\ldots,r_t$ is one
of these protected vertices. On $G$, no root is skipped in the coupling,
and the walks explored by $A$ remain disjoint until a collision is found.
This is formalized by the exploration lemma, \cref{lem:exploration}.

\paragraph{Domination on the good event.}
Condition on the good event $G$, and suppose that $A$ discovers the edge at depth
$\ell$ from $r_j$ within $t'$ queries. Set
$m=\lceil\log\ell\rceil$, so that $\ell\le2^m<2\ell$. Before
reaching $r_j$, walker $W_m$ spends
\[
       \min\{2^m,\rho(r_{j'})\}
\]
rounds on each earlier root $r_{j'}$. At the corresponding moment, the
oldest-first competitor has spent at least
$\min\{\ell,\rho(r_{j'})\}$ queries on that walk. Since
\[
   \min\{2^m,x\}\le2\min\{\ell,x\}
   \qquad\text{for every }x\ge1,
\]
$W_m$ reaches the edge by round $2t'$. Hence, if $A$ finds a collision
within $t$ queries, the coupled epoch contains both edges of a collision
witness within $2t$ rounds (\cref{lem:domination,cor:transfer}). Each
round uses at most $O(\log n)$ oracle queries.

\paragraph{A first probability bound on the good event.}
We now wish to bound $\Pr[\neg G]$. The good event fails if some root
$r_{j}$ can be reached from some vertex $r_k$ of type $i$ within no more than $2^{i+1}$ steps. There are $O(t \log t)$ such target vertices, so if they were
fixed in advance, a union bound over the $t$ roots would give a failure
probability of $O(t^2 \log t / n)$. The issue however is that they are not fixed: membership in the
boxes is determined by the survival times $\rho(r_1), \dots, \rho(r_t)$,
so the targets are correlated with the very roots we test. To bound the probability of the good event, we would like to condition on the type
$\lfloor \log \rho(r_j) \rfloor$ of every root. Given the types, the
boxes become deterministic; the amount of vertices ``at risk'' from type-$i$ roots is only a function of the number of type-$i$ roots. The price of the conditioning is that
small type classes may be overpopulated by the sample, and by Markov's inequality, controlling
this simultaneously for all $\Theta(\log n)$ classes loses a
$\Theta(\log n)$ factor. Thus, the failure probability is $O(t^2 \log n \log t / n)$. This analysis suffices to prove the main theorem for all $f \in \mathcal{F}_{c\sqrt{n} / \log n}$, i.e., it is a multiplicative factor of $\sqrt{\log n}$ away from the actual statement of the theorem. 

\paragraph{A refined probability analysis.}
We next show how to shave a $\log n$ factor from the failure probability. This allows us to extend the range in which our results work by a $\sqrt{\log n}$ factor.

The idea is to reveal the randomness more carefully; why not expose the roots one at a time?  If the walk from a root $u$ is
already fixed, then a newly exposed uniform root hits its first $L$
vertices with probability at most $L/(n-t+1)$.  The reverse statement however is
false: for a fixed vertex $v$, there may be many---even all---starting
vertices whose first $L$ steps reach $v$.  A sequential exposure therefore
controls a new root entering an old walk, but not a new walk entering an
old root.  Every failure of $G$ is such a directed pair, and no single order of revealing randomness puts the random endpoint second in every pair.

To circumvent the assymetry described above, we separately bound the probability for each root $r_{j}$ to be reachable by the set of all other roots, without attempting to bound the analogous event (of $r_{j}$ reaching another root). To do so, after
revealing the other roots, delete position $j$ and, at each scale, take
the first $\lceil t/2^i\rceil$ remaining $i$-long positions. These
surrogate boxes contain every original box member other than $j$: deleting
one position can only move the other positions forward in rank.  Their
total protected region has size $O(t\log t)$ and is fixed before
$r_{j}$ is exposed.  Conditional on the other roots, $r_{j}$ is uniform
among the $n-t+1$ unused vertices.  Hence a union bound over $j$ gives the desired bound
\[
       \Pr[\neg G]
       =O\!\left(\frac{t^2\log t}{n}\right).
\]

\paragraph{From one epoch to expectation.}
So far we have seen that if any (structure-aware) algorithm $A$ can find a collision with probability $p$ after $q$ queries on $f$, then the all-scales algorithm can do the same with probability $p' \geq p$ after $O(q \log n)$ queries. By Markov inequality, $A$ finds a collision after $2Q_A(f)$ queries with probability at least $1/2$. Thus, all-scales will find a collision with such probability after $O(Q_A(f) \log n)$ queries. To achieve the same bound in expectation, we run all-scales again and again (each time with fresh randomness) and use standard expectation analysis of geometric random variables to bound the expected Las Vegas query complexity.

\subsection{Related work}
\label{subsec:related}

\paragraph{Instance optimality and unlabeled information.}
Fagin, Lotem, and Naor \cite{FaginLN03} introduced instance optimality in
their study of aggregation algorithms: an algorithm is compared, on every
instance, with every correct algorithm from a prescribed class. Grossman,
Komargodski, and Naor \cite{GrossmanKN20} initiated a systematic study of
this notion in the decision-tree model and introduced \emph{unlabeled
certificates}, which retain the structure of an input while hiding the
names of its coordinates. Subsequent work developed the instance
complexity of Boolean functions \cite{HsiangL23}; a related
beyond-worst-case benchmark is min-entropic optimality \cite{ArnonG21}.

The work closest to ours is that of Ben-Eliezer, Grossman, and Naor
\cite{BGN24}, who studied unlabeled instance optimality for detecting
substructures in functions and graphs. They proved the lower bound that
underlies \cref{conjecture:BGN24}, together with positive results for claw
detection in the easy regime and for collision detection when backward
queries are allowed. The present paper resolves a substantial range of
the forward-query problem (which is much more natural for cryptographic applications), where the algorithm sees only values of the form
$f(x)$ but cannot query the set of sources $f^{-1}(x)$ for a given $x$. 

\paragraph{Instance-sensitive and universal algorithms.}
Guarantees tailored to an individual input have appeared in geometric
algorithms \cite{AfshaniBC17}, adaptive set operations and sorting
\cite{DemaineLM00,BaranD04,sorting1,sorting2,Narayanan2024}, distribution
testing and learning
\cite{ValiantV16,ValiantV17,HaoOSW18,HaoOrlitsky2020}, best-arm
identification \cite{chen17b,Li2022}, and sublinear graph algorithms
\cite{Gonen2011,tetek2022}. A particularly relevant precedent comes from
sorting under partial information: the comparison bound is governed by
the combinatorial structure, or entropy, of the admissible orders
\cite{Fredman76,KahnK95}. Self-improving algorithms instead learn an input
distribution and approach the best expected running time for that
distribution \cite{AilonCCLMS11}. Under the name \emph{universal
optimality}, analogous goals have been pursued in distributed computing
\cite{HWZ21} and, more recently, for shortest paths
\cite{Dijkstra2024,haeupler2024bidirectional}.

\paragraph{Symmetry in decision-tree complexity.}
Our formulation is also related to the broad literature on decision-tree
and query complexity; see the survey of Buhrman and de Wolf
\cite{BuhrmanW02} and Yao's minimax framework \cite{Yao77}. Symmetry has
long played a central role in lower bounds, most notably for graph
properties \cite{RivestV76,KahnSS84}. There is, however, an important
difference in how symmetry enters here. Classical evasiveness results use
invariance to prove a worst-case lower bound for evaluating one fixed
property. In unlabeled instance optimality, symmetry defines the
\emph{comparison class}: the competing algorithm may be tailored to the
orbit of the particular input, although it must still be correct on every
input.

\paragraph{Generic collision search.}
Random-walk methods for finding collisions go back to Pollard's rho method
\cite{Pollard75,Pollard78}. Important refinements include Brent's
cycle-detection procedure \cite{Brent80} and the parallel collision search
of van Oorschot and Wiener \cite{vanOorschotW99}. The probabilistic
behavior of such walks is closely connected to random-mapping statistics
\cite{FlajoletO90}. Time--memory tradeoffs for function inversion provide
another influential generic paradigm \cite{Hellman80,FiatN00}, while
Bellare and Kohno \cite{BellareK04} quantify how nonuniform hash outputs
affect birthday attacks. We note that these works generally do not consider the per-instance complexity, as opposed to this paper.

\paragraph{Cryptographic and quantum collision notions.}
Collision resistance has also motivated refined definitions and attacks
for hash functions, including keyless formulations \cite{Rogaway06},
multicollision attacks on iterated hashing \cite{Joux04}, and constructions
resilient to many collisions \cite{KomargodskiNY18}. In the quantum query
model, collision finding and element distinctness admit different
complexity bounds from their classical counterparts
\cite{BrassardHT98,AaronsonS04}. These lines are conceptually adjacent but
not directly comparable to our result: our model is classical and Las
Vegas, the function is an arbitrary fixed input rather than a random
oracle or a cryptographic construction, and the benchmark is the
instance-specific advantage supplied by unlabeled structural advice.

\subsection{Organization}
\label{subsec:intro-open}

\Cref{sec:prelim} defines the model and the all-scales algorithm.
\Cref{sec:standard} proves the oldest-first normal form, with full proofs
deferred to Appendix~\ref{app:reduction}. \Cref{sec:proof} develops the
coupling and proves the warm-up theorem. \Cref{sec:sharpening} proves the
sharper probability bound and derives \cref{thm:main}.
\Cref{sec:questions} concludes with open problems.

\providecommand{\revisioncolor}{}

\section{Preliminaries}
\label{sec:prelim}

Throughout the paper, $\log$ denotes the base-$2$ logarithm,
$[n] = \{1, \dots, n\}$, and $n$ is assumed to be larger than a
sufficiently large absolute constant. (For smaller $n$, all statements
hold trivially by adjusting the constant in the $O(\log n)$ factor; see
the proofs of \cref{thm:warmup,thm:main}.) We identify a function
$f \colon [n] \to [n]$ with its \emph{functional graph}: the directed
graph on vertex set $[n]$ with an edge $u \to f(u)$ for every $u$, so
that every vertex has out-degree exactly one.

\begin{definition}[Collision detection]
\label{def:collision}
Given query access to a function $f \colon [n] \to [n]$, the goal is to
either find a collision --- two vertices $u_1 \neq u_2$ where $f(u_1) = f(u_2)$ --- or to answer that no collision exists.
\end{definition}

We use the classical black-box query model. A query specifies
$x\in[n]$ and returns $f(x)$. An algorithm is \emph{Las Vegas} if it
always returns the correct answer; only its number of queries is random.
For a Las Vegas algorithm $A$, let $\Queries_A(f)$ denote the expected
number of queries on $f$ until it terminates (by finding a collision or stating that no collision exists). We do not consider other computational or memory costs in this paper.

After the symmetrization in \cref{sec:standard}, every useful query has
one of two forms. It either queries the out-neighbor of a vertex already
seen, thereby extending a walk, or queries a label not seen before,
thereby starting a new walk and discovering its first edge. In the latter
case the corresponding vertex of the hidden functional graph is uniform
among the vertices not exposed so far. We shall generate these choices
using a uniformly random permutation of $[n]$; see
\cref{subsec:setup}. This is only a description of the randomness of an
ordinary black-box algorithm, not an additional oracle operation.

\paragraph{Walks, $\rho$-values and types.}
For a vertex $u$ and an integer $\ell \ge 0$ we write
\[
u +_f \ell := f^{(\ell)}(u)
\]
for the $\ell$-th iterate of $f$ started at $u$; the \emph{walk from $u$}
is the sequence $u, u+_f 1, u+_f 2, \dots$. 
Since every vertex has out-degree
one, the walk eventually revisits a vertex; define
\[
\rho(u) := \min\bigl\{ m \ge 1 : u +_f m \in \{u, u+_f 1, \dots, u+_f (m-1)\} \bigr\}
\;\le\; n.
\]
The first $\rho(u)$ steps of the walk trace a ``$\rho$ shape'': a simple
path (the \emph{tail}) of some length $\sigma(u) \ge 0$, followed by a
cycle of length $\rho(u) - \sigma(u)$, with $u +_f \rho(u) = u +_f \sigma(u)$.
If $\sigma(u) = 0$ the walk is a pure cycle through $u$ and witnesses no
collision; if $\sigma(u) \ge 1$ then the vertex $u +_f \sigma(u)$ has two
distinct in-edges among the walk's edges, witnessing a collision.
Finally, define the \emph{type} of $u$ as
\[
\type(u) := \lfloor \log \rho(u) \rfloor \in \{0, 1, \dots, \lfloor \log n \rfloor\}.
\]

The following two observations describe what a Las Vegas algorithm must
see before it can stop. Together, they say that the transcript of any
correct algorithm must contain an explicit witness.

\begin{observation}
\label{obs:collision-witness}
A Las Vegas algorithm can output ``collision'' only when its transcript
contains two distinct discovered edges pointing into the same vertex.
\end{observation}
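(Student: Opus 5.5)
The plan is an adversary (indistinguishability) argument. Suppose a Las Vegas algorithm $A$ stops and outputs ``collision'' on some input $f$, with a transcript $\tau$ consisting of discovered edges $x_1\to f(x_1),\dots,x_k\to f(x_k)$ on distinct queried points $x_1,\dots,x_k$. Suppose also that no two of these edges share a head, so $f(x_1),\dots,f(x_k)$ are pairwise distinct. I will build a function $g\colon[n]\to[n]$ that agrees with $f$ on $\{x_1,\dots,x_k\}$ and is a permutation, hence has no collision.

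The construction goes as follows. Set $X=\{x_1,\dots,x_k\}$ and $Y=\{f(x_1),\dots,f(x_k)\}$. Since the heads are distinct, $|Y|=|X|=k$, so the complements satisfy $|[n]\setminus X|=|[n]\setminus Y|=n-k$. Define $g(x)=f(x)$ for $x\in X$, and let $g$ map $[n]\setminus X$ bijectively onto $[n]\setminus Y$ by an arbitrary bijection. Then $g$ is injective on $X$, injective on the complement of $X$, and the two images are disjoint. So $g$ is a bijection of $[n]$ and has no collision.

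Next, $A$ cannot tell $g$ from $f$. Fix the internal random string $r$ of $A$ that produced $\tau$. By induction on the number of queries, each query of $A$ on $g$ with randomness $r$ coincides with its query on $f$, since the next query depends only on $r$ and the answers seen so far, and these answers agree because $g$ and $f$ coincide on $X$. Hence $A$ produces the identical transcript on $g$ and outputs ``collision'' there too. This event has positive probability, because $r$ occurred with positive probability, or, for a continuous source of randomness, because the set of such $r$ has positive measure. That contradicts $A$ being always correct on every function $[n]\to[n]$. Note that \cref{def:inst_opt_inf} requires this correctness even outside $\mathcal F$.

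I expect no real obstacle. The only point needing care is the claim that the two sides of the map have equal size, $n-k$, and that is exactly where distinctness of the heads is used.
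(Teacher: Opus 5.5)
Your proposal is correct and is essentially the paper's argument: the paper also extends the partial injection given by the discovered edges to a collision-free permutation $g$, observes that the same transcript arises on $g$ with positive probability, and derives a contradiction with the Las Vegas property. You additionally spell out two steps the paper leaves implicit, namely the explicit extension (a bijection from $[n]\setminus X$ onto $[n]\setminus Y$) and the fixed-coins induction showing the run on $g$ reproduces the transcript.
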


\begin{proof}
Suppose the discovered edges are pairwise into distinct vertices. Then
they form a partial injection on $[n]$, which extends to a permutation
$g$ of $[n]$. The transcript is consistent with the input being $g$, and
$g$ has no collision. Hence
on input $g$ the algorithm would, with positive probability, produce this
very transcript and err --- contradicting the Las Vegas property.
\end{proof}

\begin{observation}
\label{obs:no-collision-witness}
A Las Vegas algorithm can output ``no collision'' only when its
transcript contains the out-edges of all $n$ vertices. Consequently,
every Las Vegas algorithm makes at least $n$ queries on every
collision-free input, and every $f$ with $\OPT(f) < n$ contains a
collision (where
$\OPT(f) := \inf_{A'} \max_{\pi} \Queries_{A'}(f^\pi)$ as in
\cref{subsec:main-result}).
\end{observation}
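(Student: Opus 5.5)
The argument mirrors the proof of \cref{obs:collision-witness}: a transcript that is consistent with some collision-free function cannot safely end with ``no collision'' unless that function has already been ruled out. So I will show that if the transcript misses the out-edge of even one vertex, there is another input consistent with the transcript on which ``no collision'' is wrong. The second and third sentences of the statement will then follow formally.

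For the first sentence, suppose the algorithm outputs ``no collision'' while its transcript leaves the out-edge of some vertex $v$ undiscovered. There are two cases.

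\emph{The discovered edges already contain a collision.} Then every function consistent with the transcript has a collision. Completing the transcript arbitrarily gives an input on which this transcript occurs with positive probability and ``no collision'' is wrong.

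\emph{The discovered edges form a partial injection.} Let $D$ be the set of queried vertices. Since $v\notin D$, we have $|D|\le n-1$. Their images $f(D)$ are $|D|$ distinct values, so some $w\in[n]\setminus f(D)$ is not yet hit. Pick any $u\in D$ and define $h$ to agree with the transcript on $D$, with $h(v):=f(u)$, and with all other vertices of $[n]\setminus (D\cup\{v\})$ mapped arbitrarily. Then $h$ has the collision $h(u)=h(v)$ with $u\neq v$. It is consistent with the transcript, so the algorithm produces this same transcript on $h$ with positive probability and errs, contradicting the Las Vegas property. (If $D=\emptyset$, take $h$ constant, which has a collision for $n\ge2$.)

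Hence every query set at termination with output ``no collision'' has size $n$. On a collision-free input the algorithm must output ``no collision'', so it always makes at least $n$ queries, and $\Queries_{A'}(g)\ge n$ for every Las Vegas $A'$ and every collision-free $g$. Relabeling preserves collision-freeness, so if $f$ is collision-free then $\max_\pi\Queries_{A'}(f^\pi)\ge n$ for every $A'$, which gives $\OPT(f)\ge n$. The contrapositive yields the final claim.

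The only delicate step is the construction of a consistent input with a collision when the discovered edges are injective. This needs the missing vertex $v$ to be free to be mapped onto an already-used image, and that is exactly where the assumption that some out-edge is unknown enters.
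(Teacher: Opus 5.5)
Your proof is correct and matches the paper's argument: map the unqueried vertex onto the image of an already-queried vertex to get a consistent input with a collision (or use a constant function when nothing has been queried), then deduce at least $n$ queries and $\OPT(f)\ge n$ for collision-free $f$. Two parts are unnecessary: the separate case where the discovered edges already contain a collision, since the redirection argument covers it too, and the vertex $w\notin f(D)$, which is never used.
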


\begin{proof}
Suppose the out-edge of some vertex $u$ is not in the transcript. If the
transcript contains at least one edge $w \to v$, modify the input to $f'$
with $f'(u) := v$ and $f' = f$ elsewhere; then $f'$ has a collision
($u \neq w$ map to $v$) and is consistent with the transcript, which
again has positive probability under $f'$. If the transcript contains no
edge at all, it is consistent with a constant function. Either way,
answering ``no collision'' errs with positive probability on some input.
\end{proof}

\subsection{The all-scales algorithm}
\label{subsec:allscales}

We now define the algorithm we analyze, a variant of the all-scales
algorithm of \cite{BGN24}. It maintains one \emph{walker} $W_i$ for
each scale $i=0,1,\ldots,L:=\lceil\log n\rceil$. During an epoch, all
walkers use the same uniformly random ordering
$r_1,r_2,\ldots,r_n$ of the vertices. Epoch lengths double, and the
algorithm keeps all edges discovered in earlier epochs.

\begin{definition}[All-scales algorithm, shared-stream variant]
\label{def:allscales}
The algorithm $\Aall$ maintains a global \emph{memory} $M$ of discovered
edges $u \to f(u)$, initially empty, and proceeds in epochs
$e = 1, 2, 3, \dots$ At the start of each epoch, a fresh uniformly
random permutation $r_1,r_2,\ldots,r_n$ is fixed (lazily), and
every walker is reset to the beginning of the stream; the memory $M$ is
kept. The epoch lasts $2^e$ rounds. In each round, every walker performs
one unit of work:
\begin{itemize}
\item if the walker is \emph{idle}, it starts a new walk at the next
vertex of the permutation that it has not yet processed and immediately
takes the first step of that walk; if it has processed all $n$ vertices,
it does nothing;
\item otherwise, it advances its current walk by one edge --- if the edge
is in $M$ this is a free lookup, and otherwise it is one query, whose
answer is added to $M$.
\end{itemize}
Walker $W_i$ ends its current walk (and becomes idle) when the walk has
made $2^i$ steps, or when the walk revisits one of its own vertices,
whichever comes first. At the end of every round, if two distinct edges
of $M$ point into the same vertex, the algorithm reports that collision
and terminates; if $M$ contains the out-edges of all $n$ vertices, the
algorithm reports ``no collision'' and terminates.
\end{definition}

\ifdefined\theoryversion
\begin{algorithm}[H]
\else
\begin{algorithm}
\fi
\caption{The all-scales algorithm $\Aall$ (shared-stream, doubling-epochs variant)}
\label{alg:allscales}
\begin{algorithmic}[1]
\State $M \gets \emptyset$ \Comment{global memory: set of discovered edges $u \to f(u)$}
\For{epoch $e = 1, 2, 3, \dots$}
  \State fix a fresh uniformly random permutation $r_1,r_2,\ldots,r_n$ (sampled lazily)
  \State reset walkers $W_0, W_1, \dots, W_{\lceil \log n \rceil}$; each is idle, at stream position $0$
  \For{$2^e$ rounds}
    \For{\textbf{all} walkers $W_i$} \Comment{one unit of work per walker per round}
      \If{$W_i$ is idle}
        \If{$W_i$ has not yet processed all $n$ vertices}
          \State $W_i$ starts at its next vertex $r_j$ and takes the first step
        \EndIf
      \Else
        \State $W_i$ advances its walk by one edge, via $M$ if known, else by one query (added to $M$)
      \EndIf
      \If{$W_i$'s walk has made $2^i$ steps or revisited one of its own vertices}
        \State declare $W_i$ idle
      \EndIf
    \EndFor
    \If{two distinct edges in $M$ point into the same vertex $v$}
      \State \Return the collision at $v$
    \EndIf
    \If{$M$ contains the out-edges of all $n$ vertices}
      \State \Return ``no collision''
    \EndIf
  \EndFor
\EndFor
\end{algorithmic}
\end{algorithm}

We use three accounting conventions. First, choosing the next starting
vertex costs no query; the first step from it is an ordinary query unless
its edge is already in memory. A lookup of a known edge is free, but it
still uses the walker's unit for that round. Thus trajectories and
timings depend only on $f$ and the permutation, not on the current
memory. Second, each round costs at most
$L+1\le\log n+2$ queries. Third, a walker stops only on its own step cap
or on a self-repeat of its
current walk; if its walk merges into territory explored by another
walker (or by itself in an earlier walk), it keeps walking through known
edges. Collisions are detected not by the walkers but by the memory: as
soon as $M$ holds two in-edges of the same vertex, from whatever source,
the collision is reported. In particular, when a walk self-repeats at a
vertex other than its own starting point, the repeat vertex has two
in-edges in $M$ and the collision is reported in the same round.

\begin{remark}[Relation to the algorithm of \cite{BGN24}]
\label{rem:variant}
The all-scales algorithm described in \cite{BGN24} lets the walkers
choose independent starting vertices and does not restart. Our variant
uses a shared random ordering and doubling epochs. The common ordering
permits the coupling with an arbitrary competing algorithm; fresh
orderings make the long epochs independent trials. We expect the
original variant to satisfy the same guarantee, but our proof does not
show this; see
\cref{sec:questions}.
\end{remark}

\begin{remark}[Termination]
\label{rem:termination}
$\Aall$ is Las Vegas. Every reported collision is certified by two
queried edges, and ``no collision'' is reported only when all $n$ edges
are known. Moreover, $W_0$ queries the edge of one new vertex in each
round. It therefore exposes all edges during any epoch of at least $n$
rounds, at which point the algorithm terminates with the correct answer.
In particular, the doubling schedule gives finite expected query
complexity on every input.
\end{remark}

\section{Reduction to label-oblivious, oldest-first algorithms}
\label{sec:standard}

Our benchmark, $\max_\pi \Queries_{A'}(f^\pi)$, is indifferent to vertex
labels, and this section makes the competing algorithm indifferent to
them too. We normalize an arbitrary Las Vegas algorithm $A'$ in three
steps: we make it label-oblivious (\cref{lem:relabel}), we discard
redundant queries (\cref{obs:normal-ops}), and we make its scheduling
oldest-first (\cref{lem:standard}). The key structural fact enabling the
last step is the uniform-embedding lemma (\cref{lem:uniform}). The
arguments are symmetrization steps of a standard flavor (compare the
reductions with unlabeled certificates in \cite{GrossmanKN20,BGN24}); we
state the lemmas here, explain the ideas, and defer the full proofs to
Appendix~\ref{app:reduction}.

\paragraph{Transcripts.}
The \emph{transcript} of an algorithm at a given time consists of the
sequence of operations performed so far together with their answers. The
\emph{unlabeled transcript} $H$ is the transcript with the vertex names
replaced by abstract placeholders: it records, for each operation,
whether it was a fresh-label query or an out-neighbor query of a specific
placeholder, and whether the answer was a new placeholder or coincided
with an existing one (and which). Thus $H$ is precisely the isomorphism
type of the explored partial graph, together with the exploration
history. An algorithm is \emph{label-oblivious} if its next operation
(and its final answer) is a randomized function of the unlabeled
transcript only.

\begin{lemma}[Relabeling]
\label{lem:relabel}
For every Las Vegas algorithm $A'$ for collision detection there is a
label-oblivious Las Vegas algorithm $A''$ such that for every $f$,
\[
\Queries_{A''}(f) \;\le\; \max_{\pi \in S_n} \Queries_{A'}(f^{\pi}).
\]
\end{lemma}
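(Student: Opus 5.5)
The plan is the standard symmetrization. Define $A''$ on input $f$ as follows: first sample a uniformly random permutation $\sigma\in S_n$, then run $A'$ with oracle access to $f^\sigma=\sigma^{-1}\circ f\circ\sigma$. Each query $x$ of $A'$ is answered by one query $f(\sigma(x))$ to the real oracle, returning $\sigma^{-1}(f(\sigma(x)))$. A collision $x\ne y$ with $f^\sigma(x)=f^\sigma(y)$ is reported as the collision $\sigma(x),\sigma(y)$ of $f$; this is valid because $\sigma$ is a bijection. A ``no collision'' answer is passed through unchanged. Since $A'$ is always correct on $f^\sigma$, and $f^\sigma$ has a collision iff $f$ does, $A''$ is Las Vegas. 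Each query of $A''$ costs exactly one query of $A'$, so
\[
\Queries_{A''}(f)=\Ex_{\sigma}\bigl[\Queries_{A'}(f^{\sigma})\bigr]\le\max_{\pi\in S_n}\Queries_{A'}(f^\pi).
\]

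It remains to make $A''$ label-oblivious, and this is where the work lies. The plan is to realize $A''$ equivalently by lazy sampling. It maintains a partial injection from the labels $A'$ has mentioned to real vertices, extended uniformly at random as needed.
\begin{itemize}
\item When $A'$ queries a label not yet mapped, the algorithm maps it to a uniformly random real vertex not yet in the image and queries that vertex.
\item When an answer is a real vertex not yet in the image, it assigns it a uniformly random unused label.
\end{itemize}
This produces exactly the same joint distribution as a uniform $\sigma$ fixed in advance.

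The key claim is then that, conditioned on the unlabeled transcript $H$, the labeled view presented to $A'$ is distributed as a uniformly random labeling of the placeholders of $H$. This holds independently of which real vertices were touched, because the labels are fresh uniform choices. Hence the next operation of $A''$ is distributed as a function of $H$ alone: sample a uniform labeling consistent with $H$, feed it to $A'$, and translate the chosen label back into either ``fresh'' or ``the out-neighbor of placeholder $p$''. The final answer is handled the same way.

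The one subtlety I expect is fresh-label queries. $A'$ may name a label it has never seen, and under the uniform relabeling this corresponds to a uniformly random unexposed real vertex. That is exactly the ``fresh-label query'' operation in the unlabeled transcript model, so the translation is consistent. I would verify the conditional-uniformity claim by a short induction on the number of operations: given $H$, every labeling of its placeholders by distinct labels in $[n]$ is equally likely. The induction step extends the labeling by a uniform unused label whenever a new placeholder appears.
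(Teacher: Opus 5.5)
Your construction of $A''$ (lazy uniform relabeling, one real query per simulated query, witness translated back through $\sigma$) coincides with the paper's proof. So does the bound $\Queries_{A''}(f)=\Ex_\sigma[\Queries_{A'}(f^\sigma)]\le\max_\pi\Queries_{A'}(f^\pi)$.

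The flaw is in your label-obliviousness step. Your ``key claim'' is that, conditioned on $H$, every injective labeling of the placeholders is equally likely. This is false, because a placeholder created by a fresh-label query carries whatever label $A'$ chose. For example, if $A'$ always starts by querying label $1$, the first root is labeled $1$ with probability one. Your induction breaks exactly at the fresh-query case you singled out: there it is the real vertex $\sigma(x)$ that is uniform, not the label. The proposed implementation ``sample a uniform labeling consistent with $H$ and feed it to $A'$'' is therefore not a faithful simulation. It hands $A'$ transcripts in which it queried labels it would never have chosen, so it does not reproduce the algorithm whose cost and correctness you established.

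The invariant actually needed, and the one the paper states, runs the other way. Given $H$, the real embedding is uniform over the consistent embeddings and independent of $A'$'s labeled view. Equivalently, the conditional law of $A'$'s view, and hence of the next operation, does not depend on which real vertices realize $H$. Your phrase ``independently of which real vertices were touched'' is the correct content; drop the claim that the labels are uniform.

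The cleanest justification comes straight from your lazy implementation:
\begin{itemize}
\item $A''$ stores $A'$'s labeled transcript and the label-to-placeholder map as internal state.
\item A query of $A'$ at a known label becomes ``extend placeholder $p$''; a query at a new label becomes a fresh-label query.
\item An answer coinciding with placeholder $p'$ is returned as the stored label of $p'$.
\item A new vertex gets a uniformly random unused label from $A''$'s own coins.
\end{itemize}
Each decision is then a randomized function of $H$ by construction, and no distributional claim about labels is needed.
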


The construction is simple: $A''$ applies a uniformly random
relabeling of its own before running $A'$, so the labels $A'$ sees carry
no information beyond their coincidence pattern; the cost of $A''$ on $f$
is the average cost of $A'$ over relabelings, which is at most the
maximum.

\begin{observation}
\label{obs:normal-ops}
We may assume without loss of generality that each query is either
(i) a query at a label not seen before, called \emph{starting a root},
or (ii) a query at a seen vertex whose out-edge is not yet known. We may
also assume that the algorithm halts as soon as its transcript contains a
collision witness. Before such a witness is found, the discovered edges
form vertex-disjoint directed paths and cycles. A path component may
contain more than one root: this happens when one path enters the first
vertex of another path, which need not yet have a known in-edge. This is
the only way two components can merge without producing a collision
witness.
\end{observation}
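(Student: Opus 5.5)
The plan is to prove \cref{obs:normal-ops} by a sequence of simulation arguments. Each one turns an arbitrary Las Vegas algorithm into one of the restricted form without increasing its query count on any input. The observation is used only to normalize a competitor, so it suffices that each transformation preserves correctness on every $f\colon[n]\to[n]$ and never increases the number of queries, pathwise.

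\textbf{Step 1: classify queries and discard redundant ones.} Every query is at some label $x$. Either $x$ has never appeared in the transcript, as a queried point or as an answer, or it has. In the first case the query is of type (i). In the second case, either the out-edge of $x$ is already known, or it is not, which is type (ii). A query whose out-edge is already known returns an answer the algorithm already has. We therefore modify the algorithm to simulate that answer from memory and skip the oracle call. The transcript distribution is unchanged and the cost weakly decreases.

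\textbf{Step 2: early halting.} As soon as the transcript contains two distinct edges into the same vertex, halt and output that collision. This is correct, since the collision is certified. It also cannot increase the cost, since we only truncate the run. By \cref{obs:collision-witness} the original algorithm could not have output ``collision'' earlier without such a witness. If it would answer ``no collision'', then by \cref{obs:no-collision-witness} it had seen all $n$ edges, so correctness is unaffected.

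\textbf{Step 3: structure before a witness.} Suppose no witness is present. Then every vertex has in-degree at most $1$ among the discovered edges. By definition of a query, every queried vertex has out-degree exactly $1$ and every other vertex has out-degree $0$. A finite directed graph with in- and out-degrees at most one decomposes into vertex-disjoint directed paths and cycles. Consider a component path with several roots, meaning queried vertices whose in-edge is unknown. Such a path arises exactly when a query answer $f(u)=v$ lands on a vertex $v$ that has no known in-edge, typically the starting vertex of another path. If instead $v$ had a known in-edge, we would obtain a witness. So this case is the only merging mechanism, and the structural claim follows.

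No step is a real obstacle. The only subtlety is in Step 1: labels that appeared only as answers count as ``seen''. Querying such a label is a type (ii) query, not the start of a new root, which matches the statement.
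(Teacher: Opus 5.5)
Your proposal is correct and follows essentially the same route as the paper: skip re-queries of known edges, halt at the first witness via \cref{obs:collision-witness}, and observe that without a witness all in- and out-degrees are at most one, so an edge into an interior vertex is a witness while an edge into the first vertex of a path only concatenates. One small imprecision is in Step 3. A ``root'' should mean a vertex started by a type (i) query, because after a concatenation the later root does have a known in-edge. Also, ``typically'' can be sharpened to ``exactly'': a seen vertex with no known in-edge is always the first vertex of some path component, so the edge closes a cycle if that component is $u$'s own and merges two components otherwise.
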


\begin{proof}
Re-querying a known out-edge gives no information and can be skipped. A
query at a seen vertex with unknown out-edge extends a path component; a
query at an unseen label starts one. As long as no vertex has two known
in-edges, every component of the known-edge graph is a directed path or
cycle. An edge entering an interior vertex creates a collision witness.
An edge entering the first vertex of a path merely concatenates two
paths, because that vertex may have no known in-edge. Continuing after a
collision witness is unnecessary by \cref{obs:collision-witness}.
\end{proof}

\paragraph{Uniform embeddings.}
Fix the input $f$ and a label-oblivious algorithm $A$. At any point of
the execution, the unlabeled transcript $H$ comes with an
\emph{embedding}: the injective map $e$ sending each placeholder of $H$
to the concrete vertex it stands for. Call an injective map $e$ from the
placeholders of $H$ to $[n]$ \emph{consistent} if for every explored edge
$(x, y)$ of $H$ we have $f(e(x)) = e(y)$, and write $E_H$ for the set of
consistent maps. Note that consistency is exactly what the transcript
reveals: each answer either creates a new placeholder (an explored edge
to a new vertex) or is recorded as a coincidence with an existing
placeholder (an explored edge to it), and distinct placeholders always
denote distinct vertices.

The next lemma formalizes a symmetry: a label-oblivious algorithm knows
the shape of what it has explored and nothing more, so all consistent
placements of that shape remain equally likely. The proof is by induction
on the queries. Assigning an unseen label treats all unused vertices
alike, and revealing its out-neighbor partitions the consistent
placements according to the observed coincidence pattern.

\begin{lemma}[Uniform embedding]
\label{lem:uniform}
Let $A$ be label-oblivious and fix $f$. At every point of the execution,
conditioned on the unlabeled transcript $H$ (and on all of $A$'s internal
coins), the embedding is uniformly distributed on $E_H$.
\end{lemma}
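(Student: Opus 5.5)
The plan is induction on the number of operations performed by $A$, proving the following slightly stronger statement. Fix $f$ and the internal coins of $A$. For every unlabeled transcript $H$ that occurs with positive probability, and every $e\in E_H$, the probability that the execution produces transcript $H$ with embedding $e$ is a quantity $c_H$ that does not depend on $e$. Conditioning on $H$ then yields the uniform distribution on $E_H$. Since $A$ is label-oblivious, its next operation is a randomized function of $H$ and its coins only, so it suffices to analyze one step for a fixed operation type. The base case is the empty transcript: $E_\emptyset$ is the single empty map.

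For the inductive step, suppose the next operation is \emph{starting a root}, i.e., querying a label not seen before. In the underlying execution the algorithm picks a concrete unseen label. By \cref{lem:relabel} we may take the labels to come from a uniformly random relabeling, so the new vertex is uniform among the $n-k$ vertices not in the image of the current embedding $e$, where $k$ is the number of placeholders. The answer $f(v)$ is then determined, and the new transcript $H'$ records the new placeholder $x$ together with one of three outcomes: $f(v)$ is a fresh vertex, $f(v)$ coincides with a specific existing placeholder, or $f(v)=v$. Each extension $e'\in E_{H'}$ restricts to a unique $e\in E_H$ and a unique choice of $v$ (and of $f(v)$ when it is fresh, which is forced by $v$). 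Hence $\Pr[(H',e')]=c_H\cdot \frac{1}{n-k}$, which is independent of $e'$. This is the step that needs care. One has to check that the map $e'\mapsto e'|_H$ lands in $E_H$ and that every consistent $e'$ arises exactly once. Both follow from the definition of consistency and the fact that distinct placeholders denote distinct vertices.

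Next suppose the operation is an out-neighbor query at a placeholder $x$ with unknown out-edge. The answer is deterministic given the embedding $e$, namely $f(e(x))$. The new transcript $H'$ records whether $f(e(x))$ is new or equals which existing placeholder. For each $H'$, the extensions $e'\in E_{H'}$ correspond bijectively to those $e\in E_H$ producing the outcome $H'$, with the new placeholder mapped to $f(e(x))$ when it is fresh. So $\Pr[(H',e')]=c_H$ for every $e'\in E_{H'}$, again uniform. In both cases the probability of reaching $H'$ is $c_{H'}\,|E_{H'}|$, and the induction closes.

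I expect the main obstacle to be bookkeeping rather than any idea. Specifically, one must verify that label-obliviousness really makes the choice of operation independent of $e$ given $H$ and the coins, so that no hidden dependence on labels biases the distribution. One must also handle the fresh-label step: the actual label chosen is uniform among the unseen labels, and after the random relabeling this translates to a uniform unused vertex. I would make this precise by working directly with the relabeled execution, in which "unseen label" and "unexposed vertex" coincide in distribution.
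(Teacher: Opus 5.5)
Your proposal is correct and follows essentially the same argument as the paper: induction on queries, where the fresh-label step uses the uniform lazy relabeling and the answer-revealing step uses the bijection $e\mapsto e\cup\{z\mapsto f(e(x))\}$ between the embeddings that produce a given outcome and $E_{H'}$. The only difference is bookkeeping: you track the joint probabilities $\Pr[(H,e)]=c_H$ and treat a fresh-label query as one combined step, whereas the paper splits it into assigning the label and then revealing the answer, and reuses the out-neighbor analysis for the second part.
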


\paragraph{Oldest-first algorithms.}
Call a transcript \emph{unmerged} if each path component contains one
root. On such a transcript, the length of a path is the number of edges
discovered from its root. We say that $A$ is \emph{oldest-first} if,
whenever its transcript is unmerged and it chooses to extend a path of
length $\ell$, it extends the path of that length whose root was started
first. No condition is imposed after a witness-free merge.

Two unmerged open paths of the same length are exchangeable under the
uniform embedding of \cref{lem:uniform}. There is a small issue: an
algorithm may base later decisions on the full history, including which
of these paths it extended. We therefore keep a virtual execution of the
original algorithm and a dynamic correspondence between its paths and
the actual paths. Whenever the virtual algorithm selects a path of length
$\ell$, we map that path to the oldest actual path of length $\ell$
before making the query. Exchangeability gives the correct distribution
for the answer. Appendix~\ref{app:reduction} gives the details.

\begin{lemma}[Oldest-first normalization]
\label{lem:standard}
For every label-oblivious Las Vegas algorithm $A$ there is an
oldest-first label-oblivious Las Vegas algorithm $A^\circ$ with
$\Queries_{A^\circ}(f) = \Queries_A(f)$ for every $f$.
\end{lemma}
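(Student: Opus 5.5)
The plan is to build $A^\circ$ as a simulation wrapper around a virtual copy of $A$. $A^\circ$ runs $A$ internally, keeps $A$'s virtual unlabeled transcript $H_v$, and maintains its own actual unlabeled transcript $H_a$. It also maintains a bijection $\phi$ between the placeholders of $H_v$ and those of $H_a$. The invariant is that $\phi$ is an isomorphism of explored partial graphs. In other words, $H_v$ and $H_a$ differ only in the order in which equal-shape paths were extended. As long as the transcript is unmerged, each open path component is determined by its root and its current length $\ell$.

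The simulation proceeds operation by operation. When $A$ starts a root, $A^\circ$ starts a root too, and $\phi$ maps the new placeholder to the new placeholder. When $A$ asks for the out-neighbor of a vertex $x$ on an unmerged path of length $\ell$, $A^\circ$ instead extends the oldest actual path of length $\ell$, say with tip $y$. It then re-routes $\phi$ by swapping the images of the whole path containing $x$ with those of the path containing $\phi^{-1}(y)$. The answer is reported back to $A$ through $\phi^{-1}$. Extensions of closed paths, and all operations after a witness-free merge or after a collision witness, are forwarded through $\phi$ unchanged, since no oldest-first condition applies there. By \cref{obs:normal-ops} these are the only operation types, and the termination rule is forwarded verbatim. $A^\circ$ is label-oblivious by construction, and it is oldest-first on unmerged transcripts.

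The key step is to show that the joint law of $A$'s virtual transcript is exactly the same as when $A$ runs directly on $f$. I will argue this by induction on the number of queries, using \cref{lem:uniform} applied to $A^\circ$, which is label-oblivious. Conditioned on $H_a$ and all coins, the embedding is uniform on $E_{H_a}$. Two open unmerged paths of equal length $\ell$ in $H_a$ are interchanged by a placeholder automorphism of $H_a$, namely swapping the two paths. This automorphism acts bijectively on $E_{H_a}$. So the distribution of the answer pattern when extending one path equals that when extending the other. This holds jointly with the coincidence information: whether the answer is new, hits a root of another path (a merge), or hits an interior vertex (a witness), mapped through the swap.

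Hence the conditional distribution of the next unlabeled answer that $A$ sees, given its virtual history, is the same as if the query had gone to the path $A$ requested. By \cref{lem:uniform} applied to $A$ itself, that is exactly the law in a direct run. Induction then gives equality in distribution of the whole virtual transcript, including the stopping time. Query counts coincide step for step, so $\Queries_{A^\circ}(f)=\Queries_A(f)$. Correctness (the Las Vegas property) follows because $A^\circ$'s actual transcript contains a witness exactly when the virtual one does, with the witness transported through $\phi$, and because of \cref{obs:collision-witness,obs:no-collision-witness}.

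I expect the main obstacle to be the exchangeability step. Care is needed because paths of equal length may still differ in their relation to the rest of $H_a$. Under the unmerged hypothesis, components are vertex-disjoint and the only cross-edges are absent, so the swap really is an automorphism of $H_a$. I must also check that the swap preserves "root started first" bookkeeping only for $A^\circ$'s scheduling and not for $A$'s decisions. It does, since $A$ only sees $H_v$.
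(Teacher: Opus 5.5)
Your proposal is correct and is essentially the paper's own proof. Both keep a virtual copy of $A$ together with a graph-isomorphism correspondence; on unmerged transcripts the correspondence is re-routed so that the requested length-$\ell$ path is served by the oldest actual path of length $\ell$, and the answer's law is justified because swapping two equal-length paths acts as a bijection on the uniform embeddings of \cref{lem:uniform}.
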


Combining \cref{lem:relabel,lem:standard}:

\begin{corollary}
\label{cor:normalform}
For every Las Vegas algorithm $A'$ for collision detection there is an
oldest-first label-oblivious Las Vegas algorithm $A$ with
\[
\Queries_A(f) \;\le\; \max_{\pi \in S_n} \Queries_{A'}(f^{\pi})
\qquad \text{for every } f \colon [n] \to [n].
\]
\end{corollary}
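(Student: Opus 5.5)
The corollary follows by composing the two normalizations in sequence. Start from an arbitrary Las Vegas algorithm $A'$. Apply \cref{lem:relabel} to obtain a label-oblivious Las Vegas algorithm $A''$ with
\[
\Queries_{A''}(f) \le \max_{\pi\in S_n}\Queries_{A'}(f^\pi) \qquad\text{for every } f\colon[n]\to[n].
\]
Next apply \cref{lem:standard} to $A''$. Its hypothesis is exactly that $A''$ is label-oblivious and Las Vegas, which we have. It returns an oldest-first, label-oblivious Las Vegas algorithm $A := (A'')^\circ$ with $\Queries_A(f)=\Queries_{A''}(f)$ for every $f$.

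Chaining the equality with the inequality gives the claimed bound pointwise in $f$. The only point to check is the order of quantifiers. Both lemmas construct the new algorithm uniformly, independently of $f$: the relabeling wrapper and the oldest-first virtual simulation do not depend on the input. Hence a single $A$ works for all $f$ simultaneously, as the corollary requires.

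I expect no real obstacle here, since all the substance lies in \cref{lem:relabel,lem:standard}. The only subtlety is that $A$ must preserve both properties at once. Label-obliviousness must not be lost in the oldest-first step, and \cref{lem:standard} explicitly guarantees this. The Las Vegas property is preserved at both stages, again as stated.
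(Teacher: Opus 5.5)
Your proposal is correct and matches the paper's proof, which likewise just composes \cref{lem:relabel} with \cref{lem:standard} and chains the inequality with the equality. Your quantifier check is also right, and it is already built into both lemma statements, which fix the new algorithm before quantifying over $f$.
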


In summary, every competitor may be assumed to start new roots or extend
open paths according to a label-oblivious scheduling rule. Before the
first witness-free merge, paths of the same length are extended oldest
first. The good event in \cref{subsec:boxes} will ensure that no such
merge occurs in the part of the execution used by the proof.

\section{The coupling argument}
\label{sec:proof}

This section contains the main argument. We construct a coupling between
an arbitrary normalized competitor and a single epoch of $\Aall$
(\cref{subsec:setup,subsec:boxes}), show that on a suitable good event
the epoch keeps pace with the competitor up to a factor of two
(\cref{subsec:exploration,subsec:domination}), convert the resulting
constant-probability guarantee into a bound on expected query complexity
(\cref{subsec:expectation,subsec:master}), and then bound the failure
probability of the good event by a first-moment argument
(\cref{subsec:goodevent}). Together these steps prove:

\begin{theorem}[Warm-up: threshold $\sqrt{n}/\log n$]
\label{thm:warmup}
There is an absolute constant $c_0 > 0$ such that the all-scales
algorithm (\cref{alg:allscales}) is $O(\log n)$-instance optimal with
respect to $\Feasywarm$.
\end{theorem}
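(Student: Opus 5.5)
We prove \cref{thm:warmup} by comparing one epoch of $\Aall$ with a normalized competitor. Fix $f\in\Feasywarm$ and any Las Vegas $A'$. By \cref{cor:normalform}, pick an oldest-first label-oblivious $A$ with $q^*:=\Queries_A(f)\le 2\OPT(f)$; if $\OPT(f)$ is within a factor of the benchmark we simply use $A'$ itself, so it suffices to show $\Queries_{\Aall}(f)=O(q^*\log n)$. Note $q^*<n$, so by \cref{obs:no-collision-witness} $f$ has a collision. Let $t$ be minimal such that $A$ finds a collision within $t$ queries with probability at least $9/10$. By Markov, $t\le 10q^*$, hence $t\le c_1\sqrt n/\log n$ for a small constant $c_1$ once $c_0$ is chosen small.

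\textbf{Coupling.} Generate $A$'s fresh roots by scanning a uniform permutation $R=(r_1,\dots,r_n)$, skipping already-embedded vertices; by \cref{lem:uniform} this is a faithful description of $A$. Couple this $R$ with the permutation of a single epoch of $\Aall$ of length $2^e\ge 2t$. Define boxes $B_i$ as the first $\lceil t/2^i\rceil$ positions $j\le t$ with $\type(r_j)\ge i$, and protect the first $2^{i+1}$ vertices on the walk from each $r_j$, $j\in B_i$. The good event $G$ says that no root $r_k$, $k\le t$, $k\ne j$, is protected for another root $r_j$.

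\textbf{Exploration lemma.} On $G$, within its first $t$ queries $A$ never skips a root, never merges walks, and only extends a walk from $r_j$ beyond depth $2^i$ if $j\in B_i$. The budget argument is that, by oldest-first, pushing a younger walk to depth $2^i$ forces all older open walks of that type to be pushed first, which costs more than $t$ unless $j$ is among the first $t/2^i$ of them. Consequently, $A$'s $j$-th root equals $r_j$, and every edge $A$ discovers lies on a protected prefix.

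\textbf{Domination.} If $A$ discovers the edge at depth $\ell$ from $r_j$ by query $t'$, then walker $W_{\lceil\log\ell\rceil}$ discovers it by round $2t'$, using $\min\{2^m,x\}\le 2\min\{\ell,x\}$ and the oldest-first accounting. Hence on $G$, whenever $A$ succeeds within $t$ queries, the epoch's memory contains a collision witness within $2t$ rounds. Each round costs $O(\log n)$ queries.

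\textbf{Good event.} This is the main obstacle. Condition on the types of $r_1,\dots,r_t$, which makes the boxes deterministic. The remaining vertices at each type are uniform within their type class. A union bound over pairs $(j,k)$ then needs the probability that $r_k$ lands in a fixed protected set of size $O(t\log t)$. Overrepresentation of small type classes costs an extra $O(\log n)$ factor via Markov over the $\Theta(\log n)$ classes. This gives
\[
\Pr[\neg G]=O(t^2\log t\log n/n)\le 1/10
\]
for $c_0$ small.

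\textbf{From one epoch to expectation.} Combining, any epoch with $2^e\ge 2t$ finds a collision with probability at least $9/10-1/10-\Pr[\neg G]\ge 1/2$, independently across epochs because each uses fresh permutations. Epochs before this cost $O(t\log n)$ in total. Later epochs each fail with probability at most $1/2$ while their costs double. That is borderline, so we sharpen the constants (take $t$ at success probability $0.99$) to make the failure probability at most $1/4$. The geometric series then gives
\[
\Queries_{\Aall}(f)=O(t\log n)=O(q^*\log n).
\]
For small $n$ we adjust the constant, using \cref{rem:termination}.
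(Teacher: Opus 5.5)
Your proposal is correct and follows the paper's proof essentially step for step: the oldest-first label-oblivious normal form with the $9/10$-quantile budget, the shared-stream coupling with boxes of capacity $\lceil t/2^i\rceil$ and protected prefixes of length $2^{i+1}$, the exploration lemma via the oldest-first budget argument, the factor-two domination through $\min\{2^m,x\}\le 2\min\{\ell,x\}$, the type-conditioned first-moment bound that loses a $\log n$ factor to overpopulation, and a geometric series over doubling epochs. The only deviation is unnecessary: your own numbers already give per-epoch success at least $7/10>1/2$, which is exactly what the paper feeds into its bound $8t/(2\bar p-1)$, so switching to the $0.99$-quantile is not needed. Your epoch argument also uses implicitly a point the paper states explicitly: memory carried over from earlier epochs never hurts, because walker trajectories are memory-independent and the collision check is monotone in the memory.
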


The only ingredient that is sensitive to the exact threshold is the
probability bound for the good event; every other step works for any
budget. \Cref{sec:sharpening} improves that single lemma, and with it the
threshold, yielding \cref{thm:main}.

\subsection{Setup and coupling}
\label{subsec:setup}

Fix an input $f$ with $\OPT(f) < n$, where
$\OPT(f) = \inf_{A'} \max_\pi \Queries_{A'}(f^\pi)$ as in
\cref{subsec:main-result}. By \cref{obs:no-collision-witness}, $f$
contains a collision. Choose a Las Vegas algorithm $A^*$ with
$\max_\pi \Queries_{A^*}(f^\pi) \le 2\OPT(f)$, and let $A$ be its
oldest-first label-oblivious normalization from \cref{cor:normalform}, so
that
\[
q^* := \Queries_A(f) \;\le\; 2\OPT(f).
\]
Define the \emph{$9/10$-quantile budget}
\[
t = t(A) := \min\bigl\{ s \in \mathbb{N} :
\Pr[A \text{ outputs a collision within } s \text{ queries on } f] \ge 9/10 \bigr\}.
\]
By Markov's inequality applied to the number of queries (recalling that
$A$, being Las Vegas on an input with a collision, outputs a collision
upon halting),
\begin{equation}
\label{eq:tbound}
t \;\le\; 10\,q^* \;\le\; 20\,\OPT(f).
\end{equation}
Until \cref{prop:master}, the budget $t$ is arbitrary; it is compared
with the two thresholds only after the coupling has been established.

\paragraph{The coupling.}
Let $R=(r_1,r_2,\ldots,r_n)$ be a uniformly random permutation of
$[n]$. The walkers process $R$ in order. To generate a fresh-label query
of $A$, scan $R$ from the current position and take the first vertex not
already embedded in $A$'s transcript. This has exactly the required
distribution: conditioned on the transcript, it is uniform among the
unexposed vertices. On the good event defined below, no scan skips a
vertex among $r_1,\ldots,r_t$; hence the $j$-th root of $A$ is $r_j$,
which gives the desired shared-root coupling. All remaining coins of $A$
are independent of $R$.
Throughout \cref{subsec:boxes,subsec:exploration,subsec:domination}, and
in the good-event bounds of \cref{subsec:goodevent,sec:sharpening}, we
consider $A$ run for at most $t$ queries and an \emph{isolated epoch run}
of the walkers, started with empty memory on the ordering $R$, for at
most $2t$ rounds. The competitor starts at most $t$ roots in $t$
queries, and the walker used to reproduce an edge from $r_j$ only needs
the prefix $r_1,\ldots,r_j$. Thus the good event need only concern
$r_1,\ldots,r_t$.

\subsection{Boxes and the good event}
\label{subsec:boxes}

Let $I := \lceil \log t \rceil$. For each scale $0 \le i \le I$ define
the \emph{box}
\[
B_i := \Bigl\{ k \in [t] : \type(r_k) \ge i \text{ and }
\#\{k' < k : \type(r_{k'}) \ge i\} < t/2^i \Bigr\}.
\]
In words, $B_i$ collects the first (roughly) $t/2^i$ stream positions
whose walks survive for at least $2^i$ steps; the budget argument in the
next subsection shows that these are the only positions that any
algorithm with budget $t$ can afford to explore to depth $2^i$. Three
immediate properties: $B_0 = [t]$ (every type is $\ge 0$ and the capacity
is $t$); $|B_i| \le \lceil t/2^i \rceil$; and each $B_i$ is a function of the type
vector $(\type(r_1), \dots, \type(r_t))$. For $j \in [t]$ let
\[
i_j := \max\{ i \le I : j \in B_i \},
\]
which is well defined since $j \in B_0$.

We use the following sufficient good event: no root lies in a region
protected by another root.
\[
G := \Bigl\{ \text{for all } i \le I,\ k \in B_i,\ 0 \le \ell \le 2^{i+1},\
j' \in [t] \setminus \{k\} : r_{j'} \neq r_k +_f \ell \Bigr\}.
\]
In words: for every scale $i$ and every box position $k \in B_i$, no root
other than $r_k$ itself lies on the first $2^{i+1}$ steps of the walk
from $r_k$. The condition with $\ell=0$ is immediate because $R$ is a
permutation. Note that $G$ is determined by $f$, $t$ and $R$ alone;
the algorithm $A$ does not enter its definition. We will show that $G$
holds with probability at least $4/5$ whenever
$t \le \sqrt{n}/(15 \log n)$ (\cref{lem:goodevent-warm}), and, by a
sharper argument, whenever $t \le \sqrt{n/(60\log n)}$
(\cref{lem:goodevent-sharp} in \cref{sec:sharpening}); and that on $G$
the epoch run dominates $A$ (\cref{cor:transfer}).

\subsection{The exploration lemma}
\label{subsec:exploration}

The next lemma describes the execution of $A$ on the good event: each
started root grows its own walk, walks never touch one another, and older
walks are longer. The last invariant below, a bound on the reach of each
walk, is where the budget enters: to push one walk to depth $2^{i+1}$,
the oldest-first rule forces every older open walk to at least that
length first, so either fewer than $t/2^i$ earlier stream positions
survive to depth $2^i$ --- placing the current root inside the box $B_i$
--- or $A$ has already exhausted its $t$ queries. The good event then
guarantees that the protected neighborhoods of box positions, which by
this budget argument contain everything $A$ explores, avoid all other
roots.

\begin{lemma}[Exploration lemma]
\label{lem:exploration}
Condition on $G$ and consider the execution of $A$ for up to $t$ queries.
At every point in time, the following invariants hold, where $\lambda_j$
denotes the number of edges discovered from the $j$-th root (the
\emph{reach} of $r_j$).
\begin{enumerate}
\item[(a)] The $j$-th fresh-label query, if it occurs, is made at $r_j$
(the scan of $R$ makes no skip). The explored graph is a vertex-disjoint
union of components, one per started root. The component of $r_j$ is
either an open path
$r_j \to r_j +_f 1 \to \dots \to r_j +_f \lambda_j$, or closed: its last
discovered edge entered an earlier vertex of the same path --- namely
$r_j$ itself, closing a pure cycle (so $\lambda_j = \rho(r_j)$), or an
interior vertex, in which case a collision was found and $A$ halted. In
particular, no discovered edge points into any root $r_{j'}$,
$j' \in [t]$, except the cycle-closing edge of $r_{j'}$'s own component.
\item[(b)] If $j < j'$ and both components are open paths, then
$\lambda_j \ge \lambda_{j'}$.
\item[(c)] $\lambda_j \le 2^{i_j + 1}$ for every started root $r_j$.
\end{enumerate}
\end{lemma}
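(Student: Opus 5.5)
We prove all three invariants simultaneously by induction on the number of queries made by $A$, assuming they hold before a query and checking that they hold after it. Initially nothing is explored and the invariants hold trivially. By \cref{obs:normal-ops}, each query either starts a root or extends an open path. Invariant (a) guarantees that the transcript stays unmerged, so the oldest-first rule of \cref{lem:standard} applies throughout. The order of the steps is as follows: first (c) is verified for the newly extended walk, then (c) together with $G$ yields (a), and finally (b) follows from the scheduling rule.

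\emph{Starting a root.} Suppose $A$ makes its $j$-th fresh-label query. The scan of $R$ skips $r_j$ only if $r_j$ is already embedded, meaning $r_j = r_k +_f \ell$ for some $k<j$ and $\ell \le \lambda_k$. By the inductive (c), $\lambda_k \le 2^{i_k+1}$ with $k \in B_{i_k}$, and $r_j \neq r_k$, so this would violate $G$. Hence the root is $r_j$ and invariant (b) is trivially preserved, since the new path has length $0$.

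\emph{Extending a path.} Suppose $A$ extends the path of $r_k$ from length $\lambda$ to $\lambda+1$. The main step is to show that $\lambda+1 \le 2^{i_k+1}$. Let $i$ satisfy $2^{i} < \lambda+1 \le 2^{i+1}$; we only need the case $\lambda+1 > 2^{i_k+1}$, i.e.\ $i \ge i_k+1$, and we derive a contradiction. Since the path is open with $\lambda+1$ edges and has not closed, $\rho(r_k) > \lambda \ge 2^i$, so $\type(r_k)\ge i$. Because $k \notin B_i$, at least $t/2^i$ positions $k'<k$ have $\type(r_{k'}) \ge i$.

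By oldest-first together with invariant (b), every such earlier $k'$ is in one of two states. Either its path is open with $\lambda_{k'} \ge \lambda \ge 2^i$, or it is closed. A closed component is a pure cycle with $\lambda_{k'} = \rho(r_{k'}) \ge 2^i$, since a collision closure would have halted $A$. One subtlety needs care here: the rule only says to extend the oldest path \emph{of the same length}. We use invariant (b), which says that older open paths are at least as long. Together with the fact that $r_{k'}$ was started (roots are started in stream order), this gives that each earlier open path already has length at least $\lambda$. In either case at least $2^i$ queries were spent on each such $k'$. Summing over the at least $t/2^i$ such positions gives at least $t$ queries already made, which contradicts the fact that we are still within the budget of $t$ queries. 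Thus (c) holds.

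We now deduce (a). The new endpoint $r_k +_f (\lambda+1)$ lies in the protected region of $k\in B_{i_k}$, so by $G$ it is not a root $r_{j'}$ with $j'\neq k$. It could still be a non-root vertex of another component $j'$, i.e.\ $r_k +_f(\lambda+1) = r_{j'} +_f m$ with $1\le m \le \lambda_{j'}$. In that case the edge either creates a collision, which is permitted because $A$ halts, or it enters the first vertex of another path. The first vertex of a path is a root, and that case is excluded. An edge landing on an interior vertex of another component yields a collision and a halt. Therefore disjointness is preserved up to the halting query, and the only edge into a root is the cycle-closing one. Finally, for (b), the extended path $r_k$ was the oldest of length $\lambda$. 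Any older open path had length at least $\lambda$ by (b), and by the choice of $r_k$ none of them had length exactly $\lambda$, so each has length at least $\lambda+1$. Hence (b) persists.

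The delicate step is the budget counting in (c). It must handle closed earlier walks correctly, which requires $\rho \ge 2^i$ from the type, and it must use the fact that box membership counts all positions of type $\ge i$ among the started roots $r_1,\dots,r_{k-1}$. All of these roots were started, since roots enter in stream order by (a).
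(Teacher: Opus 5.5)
Your proof follows the paper's: the same induction on queries, the same order ((c) first, then (a) from $G$, then (b) from the oldest-first rule), and the same budget count. In that count, each of the at least $t/2^i$ earlier positions of type $\ge i$ costs at least $2^i$ queries, via $\rho\ge 2^i$ if its component is closed and via (b) if it is open.

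There is one small gap in your (c) step: the boundary case $i_k=I$. Then the only violating extension has $\lambda=2^{I+1}$, so your $i$ equals $I+1$. Boxes are defined only for $i\le I$, and $i_k$ is a maximum over $i\le I$, so maximality does not give ``$k\notin B_i$'' at scale $I+1$. In fact, if you extended the definition to that scale, its capacity $t/2^{I+1}\le 1/2$ would still admit the first position of type $\ge I+1$, so $k$ could lie in it. The paper handles this case separately: $\lambda=2^{I+1}\ge 2t$ edges on $r_k$'s own path already exceed the budget of $t$ queries. Apart from this case, your counting argument is sound.

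Two smaller bookkeeping points.

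\begin{itemize}
\item A fresh-label query also reveals $f(r_j)$, so after it the new path has length $1$, not $0$. You should explicitly run your extension analysis with $\lambda=0$ for this first edge, as the paper does. There, (b) holds not because of the oldest-first rule, which governs extensions rather than fresh starts. It holds because every older root had its first edge revealed when it was started, so all older open paths have length at least $1$.
\item In (a) you should treat the case where the new endpoint lies in $r_k$'s own component. Landing on $r_k$ closes a pure cycle with $\lambda_k=\rho(r_k)$; this is allowed, since $G$ only forbids the other roots. Landing on $r_k+_f m$ with $1\le m\le\lambda$ creates two distinct in-edges, hence a collision and a halt.
\end{itemize}
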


\begin{proof}
We argue by induction on the queries of $A$. For a fresh-label query, it
is convenient to separate two conceptual substeps: assigning the unseen
label to a vertex of $f$, and then revealing the out-edge of that vertex.
Only the second substep is an oracle query.

Suppose first that $A$ is about to start its $j$-th root, where $j\le t$.
By the inductive hypothesis, every previously explored vertex has the
form $r_k+_f \ell$ with $k\in B_{i_k}$ and
$0\le\ell\le\lambda_k\le2^{i_k+1}$. The definition of $G$, applied
with $k$, $i=i_k$, and $j'=j$, shows that the next raw permutation
element $r_j$ is not among these vertices. The scan therefore does not
skip it, and the new label is assigned to $r_j$. We may momentarily view
it as a fresh path of length zero; (a)--(c) hold at this conceptual
substep. The query now reveals its first edge and is covered by the
extension analysis below with $\ell=1$.

Now suppose that $A$ extends the open path of $r_j$ from length
$\ell - 1$ to length $\ell$, discovering the vertex $v := r_j +_f  \ell$.

We first verify that (c) cannot be violated, i.e., that
$\ell \le 2^{i_j + 1}$. Suppose not, so $\ell = 2^{i_j + 1} + 1$
(invariants held before the step). If $i_j = I$ then
$\ell > 2^{I+1} \ge 2t$, exceeding $A$'s total budget of $t$ queries ---
impossible. So $i_j < I$. Since the path of $r_j$ was open at length
$2^{i_j+1}$, the walk from $r_j$ does not repeat within its first
$2^{i_j+1}$ steps, so $\rho(r_j) > 2^{i_j+1}$ and
$\type(r_j) \ge i_j + 1$. As $i_j$ is maximal, $j \notin B_{i_j+1}$,
which by the definition of the box forces its capacity to be exhausted:
\[
\#\{k' < j : \type(r_{k'}) \ge i_j + 1\} \;\ge\; t/2^{i_j+1}.
\]
Consider any such $k' < j$ at the present moment. If its component is
closed, its reach is $\lambda_{k'} = \rho(r_{k'}) \ge 2^{i_j+1}$ (as
$\type(r_{k'}) \ge i_j + 1$), so $A$ spent at least $2^{i_j+1}$ queries
on it. If its component is an open path then, by (b),
$\lambda_{k'} \ge \lambda_j = 2^{i_j+1}$, again at least $2^{i_j+1}$
queries. Summing over the at least $t/2^{i_j+1}$ such roots, and adding
the $2^{i_j+1} \ge 1$ queries spent on the path of $r_j$ itself, $A$ has
made more than $(t/2^{i_j+1}) \cdot 2^{i_j+1} = t$ queries --- a
contradiction. Hence $\ell \le 2^{i_j+1}$ and (c) is maintained.

Now consider the discovered vertex $v = r_j +_f  \ell$ with
$1 \le \ell \le 2^{i_j+1}$. Since $j \in B_{i_j}$, the event $G$
guarantees $v \neq r_{j'}$ for every $j' \in [t] \setminus \{j\}$: the
new edge does not enter any other root. Three cases remain. (i) $v$ is a
new vertex: the path of $r_j$ extends; (a) is clear. If $\ell=1$ is the
first edge of a fresh query, every older open path already has length at
least one, and (b) follows. Otherwise $A$ is oldest-first: it chose to
extend a path of length $\ell-1$ and picked the oldest such. Hence every
$j'<j$ with an open path had length at least $\ell-1$ before the query,
and none had length exactly $\ell-1$; thus its length is at least
$\ell=\lambda_j$. This proves (b). (ii) $v$ belongs to the component of
$r_j$: if $v = r_j$, the
component closes into a pure cycle with $\lambda_j = \ell = \rho(r_j)$;
no collision is created ($r_j$'s only known in-edge is the cycle edge)
and all invariants are maintained. If $v = r_j +_f  m$ for some
$1 \le m \le \ell - 1$, then $v$ now has two known in-edges, from
$r_j +_f  (m-1)$ and from $r_j +_f  (\ell-1)$ --- distinct vertices, as the
path $r_j, \dots, r_j +_f  (\ell - 1)$ is simple --- so $A$ has found a
collision and halts, as allowed by (a). (iii) $v$ belongs to the
component of another root $r_k$, $k \neq j$: since $v$ is not a root,
$v = r_k +_f  m$ with $m \ge 1$, and $v$ already has the known in-edge from
$r_k +_f  (m-1)$, a vertex different from $r_j +_f  (\ell - 1)$ by the
disjointness of components. Again $A$ has found a collision and halts.
\end{proof}

Invariants (a) and (c) immediately give:

\begin{corollary}[Coverage]
\label{cor:coverage}
On $G$, every vertex explored by $A$ within its first $t$ queries has the
form $r_k +_f  \ell$ with $k \in B_{i_k}$ and $0 \le \ell \le 2^{i_k + 1}$.
\end{corollary}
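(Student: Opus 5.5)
The Coverage corollary follows almost directly from \cref{lem:exploration}, and the plan is to read it off invariants (a) and (c). Condition on $G$ and fix any moment during the first $t$ queries of $A$. By invariant (a), the explored graph is a vertex-disjoint union of components, one for each started root $r_k$. Each such component is either an open path $r_k, r_k+_f 1,\dots,r_k+_f\lambda_k$, or a closed component whose vertices are again $r_k+_f\ell$ for $0\le\ell\le\lambda_k$.

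A closed component is either a pure cycle through $r_k$, or the final state right after a collision edge entered an interior vertex. In the collision case, the vertex entered was already on some path and so was already listed. Hence every explored vertex equals $r_k+_f\ell$ for some started root $r_k$ and some $0\le\ell\le\lambda_k$.

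Next, apply invariant (c), which gives $\lambda_k\le 2^{i_k+1}$, so $0\le\ell\le 2^{i_k+1}$. Membership $k\in B_{i_k}$ holds by the very definition $i_k=\max\{i\le I: k\in B_i\}$, which is well defined because $B_0=[t]$. Every started root has index $k\le t$, since $A$ makes at most $t$ fresh-label queries within $t$ queries, and by (a) its $k$-th root is $r_k$. This confirms that the index lies in $[t]$, where the boxes are defined.

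The only point needing care is the vertex revealed by the halting query. When the last edge enters the component of another root $r_{k'}$, invariant (a) says that vertex is $r_{k'}+_f m$ with $m\le\lambda_{k'}$, so it is already covered. When the last edge closes the walk of $r_j$ onto itself, the vertex is $r_j+_f m$ with $m<\ell\le 2^{i_j+1}$, also covered. So no separate obstacle arises; the whole corollary is bookkeeping on top of the exploration lemma.
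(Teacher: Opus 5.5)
Your proposal is correct and follows the paper's own route: the paper derives the corollary directly from invariants (a) and (c) of \cref{lem:exploration}, and $k \in B_{i_k}$ holds by the definition of $i_k$. Your separate check of the halting query is also sound, since the lemma's proof establishes $\ell \le 2^{i_j+1}$ for every extension step before its case analysis.
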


\subsection{The domination lemma}
\label{subsec:domination}

We now show that, on the good event, everything $A$ discovers is
discovered --- at most twice as slowly --- by the coupled epoch run of
the walkers. The walker responsible for an edge at depth $\ell$ of some
walk is the one whose cap $2^m$ is the first power of two at or above
$\ell$. The competitor and this walker process the same stream prefix,
and the oldest-first rule lets us compare their costs root by root: on
each earlier root, the competitor has paid at least
$\min(\ell, \rho)$ queries --- its open walks are at least as long as
$\ell$, and its closed walks paid their full $\rho$ --- while the walker
pays $\min(2^m, \rho) \le 2\min(\ell, \rho)$ units. The walker therefore
arrives at most a factor of two late.

\begin{lemma}[Domination]
\label{lem:domination}
Condition on $G$. Suppose that within its first $t' \le t$ queries, $A$
discovers the edge $\bigl(r_j +_f  (\ell-1)\bigr) \to (r_j +_f \ell)$, for
some $j \le t$ and $\ell \ge 1$, as the $\ell$-th edge of $r_j$'s
component. Then by the end of round $2t'$ of the coupled isolated epoch
run, this edge is in the memory $M$, unless a collision has already been
reported.
\end{lemma}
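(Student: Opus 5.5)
We use walker $W_m$ with $m := \lceil \log \ell \rceil$, so $\ell \le 2^m < 2\ell$ (and $m=0$ when $\ell=1$). Note $m \le L$: by \cref{lem:exploration}(c), $\ell \le 2^{i_j+1} \le 2^{I+1}$, and in any case $\ell \le t' \le n$. In the isolated epoch run, $W_m$ processes the stream in order. Its trajectory depends only on $f$ and $R$, because lookups in memory still consume the walker's unit. On each root $r_{j'}$ it spends exactly $\min\{2^m, \rho(r_{j'})\}$ rounds: it stops either at its cap or at a self-repeat, and the self-repeat happens after exactly $\rho(r_{j'})$ steps. Consequently $W_m$ traverses the edge $(r_j +_f (\ell-1)) \to (r_j +_f \ell)$ at round
\[
T := \sum_{j'<j} \min\{2^m,\rho(r_{j'})\} + \ell ,
\]
provided $\ell \le \min\{2^m,\rho(r_j)\}$. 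The first inequality holds by the choice of $m$. For the second, $A$ discovered this edge as the $\ell$-th edge of an open-then-extended component, so the walk from $r_j$ had not repeated before step $\ell$, i.e.\ $\ell \le \rho(r_j)$. Once traversed, the edge is in $M$: either it was already there, or the walker queries it and adds it. So it suffices to show $T \le 2t'$.

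Fix the moment $\tau \le t'$ at which $A$ makes the query discovering this edge. We charge $A$'s queries up to $\tau$ to roots $r_{j'}$ with $j'<j$, using \cref{lem:exploration} on $G$. By part (a), the $j'$-th root started by $A$ is $r_{j'}$ and components are disjoint. Hence the queries spent on different components are disjoint, and every root $r_{j'}$ with $j'<j$ has been started, since roots are started in stream order and $r_j$ is already started.

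For each $j'<j$, let $\lambda_{j'}$ be its reach just before the query at $\tau$. If the component of $r_{j'}$ is closed without a collision, it is a pure cycle and $\lambda_{j'} = \rho(r_{j'})$. If it is open, then just before the query extending $r_j$ from $\ell-1$ to $\ell$, part (b) gives $\lambda_{j'} \ge \ell-1$. The oldest-first rule, exactly as in case (i) of the exploration proof, excludes $\lambda_{j'} = \ell-1$, so $\lambda_{j'} \ge \ell$. For $\ell=1$ this holds because every started root already has its first edge. In both cases $A$ spent at least $\min\{\ell,\rho(r_{j'})\}$ queries on $r_{j'}$, and it spent $\ell$ queries on $r_j$ including the current one. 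Therefore
\[
t' \ge \sum_{j'<j}\min\{\ell,\rho(r_{j'})\} + \ell .
\]
The elementary bound $\min\{2^m,x\} \le \min\{2\ell,x\} \le 2\min\{\ell,x\}$, together with $\ell \le 2\ell$, then gives $T \le 2t'$.

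A closed component with a collision cannot occur before $\tau$, since $A$ would have halted. The ``unless a collision has already been reported'' clause covers the case where the epoch run terminates earlier.

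I expect the main obstacle to be the comparison on open paths: showing that each older open path has length at least $\ell$, not merely $\ell-1$. This needs oldest-first applied at the exact query, together with the unmerged structure guaranteed by $G$. A second point of care is checking that the walker's timing is unaffected by memory and merges (per the accounting conventions) and by collisions reported earlier, which only help.
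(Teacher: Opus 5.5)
Your proof is correct and takes the same route as the paper. You track walker $W_m$ with $m=\lceil\log\ell\rceil$, which reaches the edge by round $\sum_{j'<j}\min(2^m,\rho(r_{j'}))+\ell$; oldest-first and the exploration lemma force $A$ to have paid at least $\min(\ell,\rho(r_{j'}))$ on each older root; and $\min(2^m,x)\le 2\min(\ell,x)$ closes the comparison. Your small variations, such as bounding $m\le L$ via $\ell\le n$ and stating explicitly that per-component query counts are disjoint and that all earlier roots were started, are harmless and only make implicit steps explicit.
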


\begin{proof}
Let $m := \lceil \log \ell \rceil$, so that $\ell \le 2^m < 2\ell$ and
$m \le \lceil \log t \rceil \le L$: the walker $W_m$ exists. We track
$W_m$ in the isolated epoch run, assuming no collision report interrupts
it (otherwise we are done).

First, $W_m$'s walk from $r_j$, if reached, traverses the required edge:
by \cref{lem:exploration}(a), $A$'s component of $r_j$ has $\ell$
discovered edges, so the walk from $r_j$ does not self-repeat before step
$\ell$, i.e., $\rho(r_j) \ge \ell$; since also $2^m \ge \ell$, walker
$W_m$ walks at least $\ell$ steps from $r_j$ (its $\ell$-th step being
exactly the edge in question --- possibly a free lookup, but the edge is
then already in $M$). Recall from \cref{sec:prelim} that merging into
previously explored territory does not stop a walker; only its cap and
self-repeats of the current walk do. Hence the number of units $W_m$
spends on the walk from $r_{j'}$ is exactly
$\min(2^m,\rho(r_{j'}))$, one for each step.

Thus, by the end of round
\[
R := \sum_{j' < j} \min\bigl(2^m, \rho(r_{j'})\bigr) + \ell,
\]
walker $W_m$ has processed $r_1, \dots, r_{j-1}$ and walked $\ell$ steps
from $r_j$. It remains to show $R \le 2t'$.

Consider the moment at which $A$ queried the edge in question, extending
$r_j$'s path from length $\ell - 1$ to $\ell$; by then $A$ had made at
most $t'$ queries. These include the $\ell$ queries on $r_j$'s component
and, for each $j'<j$, at least
$\rho(r_{j'}) \ge \min(\ell, \rho(r_{j'}))$ of them if the component was
closed by that moment, and at least $\ell \ge \min(\ell, \rho(r_{j'}))$
of them if it was open (by the oldest-first property, as in case (i) of
\cref{lem:exploration}: at the extension moment every older open path had
length at least $\ell$). Hence
\[
t' \;\ge\; \sum_{j' < j} \min\bigl(\ell, \rho(r_{j'})\bigr) + \ell.
\]
Finally, $\min(2^m, x) \le 2\min(\ell, x)$ for every $x \ge 1$: if
$x \le \ell$ this reads $\min(2^m, x) = x \le 2x$, and if $x > \ell$ it
reads $\min(2^m, x) \le 2^m < 2\ell$. Therefore
\[
R = \sum_{j' < j} \min\bigl(2^m, \rho(r_{j'})\bigr) + \ell
\;\le\; 2 \sum_{j' < j} \min\bigl(\ell, \rho(r_{j'})\bigr) + \ell
\;\le\; 2t'. \qedhere
\]
\end{proof}

\begin{corollary}[Collision transfer]
\label{cor:transfer}
Condition on $G$, and suppose $A$ outputs a collision within $t$ queries.
Then the coupled isolated epoch run reports a collision within $2t$
rounds.
\end{corollary}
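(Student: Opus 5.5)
The plan is to read off a collision witness from the transcript of $A$ and transfer each of its edges to the epoch run with \cref{lem:domination}. Suppose that, on $G$, $A$ outputs a collision after $t'\le t$ queries. By \cref{obs:collision-witness} and the halting convention of \cref{obs:normal-ops}, the transcript at that moment contains two distinct discovered edges into a common vertex $v$. By \cref{lem:exploration}(a), every discovered edge is the $\ell$-th edge of some root's component, for some $\ell\ge1$; that is, it has the form $(r_j+_f(\ell-1))\to(r_j+_f\ell)$. This holds even for the final, collision-producing edge, because the analysis in that lemma's proof covers the query that creates the witness: the edge is the extension of $r_j$'s open path at depth $\ell\le 2^{i_j+1}$.

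The first step is to apply \cref{lem:domination} separately to each of the two witness edges, with budget $t'$. That lemma requires the edge to be discovered within the first $t'\le t$ queries as the $\ell$-th edge of $r_j$'s component, which is exactly the situation above. It then gives that, by the end of round $2t'\le 2t$ of the isolated epoch run, either the edge is in $M$ or a collision has already been reported.

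If neither edge triggered an earlier report, then both edges are in $M$ by round $2t$. These are two distinct edges of $M$ pointing into $v$, so the end-of-round check of \cref{alg:allscales} reports a collision no later than round $2t$. Otherwise a collision was already reported earlier. In either case the run reports a collision within $2t$ rounds.

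The only point that needs care is confirming that the witness-creating edge is covered by \cref{lem:domination}. That lemma's proof uses the facts that $\rho(r_j)\ge\ell$ and that the oldest-first comparison holds at the extension moment. Both hold for the final edge: its tail $r_j+_f(\ell-1)$ was the endpoint of an open path of length $\ell-1$, so the walk from $r_j$ is simple through step $\ell-1$ and reaches step $\ell$ before any self-repeat is counted, giving $\rho(r_j)\ge\ell$. The extension was also made under the oldest-first rule while the transcript was still unmerged, which is guaranteed by \cref{lem:exploration}(a) on $G$.

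The collision may be of the self-repeat kind, where the final edge enters an interior vertex of $r_j$'s own path. In that case the walker $W_m$ with $2^m\ge\ell$ reaches the repeat, so the edge is recorded and the collision is reported in the same round, as noted in the accounting conventions.
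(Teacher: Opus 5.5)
Your proof is correct and follows the paper's argument. You extract the two witness edges into a common vertex $v$, use \cref{lem:exploration} to see that each is the $\ell$-th edge of some root's walk discovered within $t'\le t$ queries, apply \cref{lem:domination} to each, and conclude from the end-of-round memory check; your extra check that the witness-creating edge itself satisfies the hypotheses of \cref{lem:domination} (in particular $\rho(r_j)\ge\ell$) is a detail the paper leaves implicit.
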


\begin{proof}
By \cref{obs:collision-witness} and \cref{lem:exploration}, $A$'s
collision is found at the moment its last query --- made at some time
$t' \le t$ --- discovers an already-explored, non-root vertex $v$ (cases
(ii)--(iii) of the extension step). The two witness edges are then
$e_1 = \bigl(r_k +_f  (m-1)\bigr) \to v$ with $v = r_k +_f  m$, $m \ge 1$,
discovered by $A$ at some time $\le t'$, and
$e_2 = \bigl(r_j +_f  (\ell-1)\bigr) \to v$ with $v = r_j +_f  \ell$,
$\ell \ge 1$, discovered at time $t'$; their sources are distinct
vertices. By \cref{lem:domination}, each of $e_1, e_2$ is in $M$ by the
end of round $2t'$ of the epoch run (or a collision was reported even
earlier, and we are done). Since the epoch run checks $M$ for collisions
at the end of every round, it reports one by the end of round
$2t' \le 2t$.
\end{proof}

\subsection{From constant probability to expectation}
\label{subsec:expectation}

The coupling gives a constant-probability guarantee for a single epoch.
Restarting from scratch would waste it --- the budget $t$ is not known to
the algorithm --- so $\Aall$ doubles the epoch length instead, refreshing
the stream but keeping the memory. Fresh streams make the epochs
independent trials; retained memory can only help, because walker
trajectories do not depend on the memory and the collision check is
monotone in it. The expected number of rounds is then a geometric series
over epochs.

\begin{proposition}[Epochs]
\label{prop:expectation}
Let $t \ge 1$ and let $\bar p$ be the probability that an isolated epoch
run of the walkers --- fresh stream, empty memory --- reports a collision
within $2t$ rounds. If $\bar p > 1/2$, then the expected number of rounds
of $\Aall$ on $f$ is at most $8t/(2\bar p - 1)$, and its expected number
of queries is at most $(\log n + 2) \cdot 8t/(2\bar p - 1)$.
\end{proposition}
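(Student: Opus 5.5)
Fix $t$ and $\bar p>1/2$. The plan is to compare $\Aall$ epoch by epoch with independent isolated epoch runs, and then sum a geometric series over epochs. Let $e_0$ be the smallest epoch index with $2^{e_0}\ge 2t$, so $2^{e_0}<4t$ (or $2^{e_0}\le 2$ if $t<1$, which cannot happen). Write $\mathcal{E}_e$ for the event that the isolated run driven by the stream of epoch $e$ (fresh stream, empty memory) reports a collision within $2t$ rounds. The streams of different epochs are independent, so the events $\mathcal{E}_e$ for $e\ge e_0$ are independent, and each has probability $\bar p$.

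\textbf{Monotonicity.} The first step is to show that if $\mathcal{E}_e$ occurs, then $\Aall$ terminates by the end of round $2t$ of epoch $e$, if not earlier. The walker trajectories and timings during epoch $e$ depend only on $f$ and the stream, not on memory, by the accounting convention in \cref{subsec:allscales}. Hence after round $s$ of epoch $e$, the memory $M$ of $\Aall$ contains the memory of the isolated run after round $s$. It contains also the edges of earlier epochs, and that can only add edges. The collision test, ``two distinct edges into one vertex'', is monotone in $M$. So whenever the isolated run reports a collision, $\Aall$ reports one in the same round or earlier. The alternative is that $\Aall$ terminates with ``no collision'', which also ends it. Because $2^e\ge 2t$ for $e\ge e_0$, round $2t$ lies inside epoch $e$.

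\textbf{Summation.} If $\Aall$ is still running at the start of epoch $e>e_0$, then $\mathcal{E}_{e_0},\dots,\mathcal{E}_{e-1}$ all failed. This has probability at most $(1-\bar p)^{e-e_0}$. The rounds spent before epoch $e_0$ total $\sum_{e<e_0}2^e<2^{e_0}<4t$. The cost of epoch $e\ge e_0$ is at most $2^e=2^{e_0}2^{e-e_0}$, counted only if the run reaches it. The expected number of rounds is therefore at most
\[
4t+\sum_{k\ge0}2^{e_0}2^k(1-\bar p)^k\le 4t+\frac{4t}{1-2(1-\bar p)}=4t+\frac{4t}{2\bar p-1}\le\frac{8t}{2\bar p-1},
\]
using $2\bar p-1\le1$. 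The series converges precisely because $\bar p>1/2$, which is why this hypothesis is needed. Multiplying by the per-round query bound $L+1\le\log n+2$ gives the query bound.

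The only delicate point is the monotonicity step, namely that retained memory never hurts. Free lookups still consume a walker's unit, so timings are memory-independent and the coupling between each epoch and its isolated run is exact.
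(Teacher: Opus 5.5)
Your proof is correct and follows the paper's argument. It couples each long epoch with its isolated run via memory-independent walker timings and a monotone collision test, uses independence of the fresh streams, and sums a geometric series that converges because $2(1-\bar p)<1$. The only cosmetic difference is bookkeeping: you weight each epoch length by the probability of reaching that epoch, getting $4t+4t/(2\bar p-1)$, whereas the paper sums over the distribution of the first successful long epoch $E$ and bounds the total number of rounds by $2^{E+1}$.
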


\begin{proof}
Call epoch $e$ \emph{long} if $2^e \ge 2t$, i.e.,
$e \ge e^* := \lceil \log(2t) \rceil$, and \emph{successful} if the
isolated run on that epoch's stream (with empty memory) reports a
collision within $2t$ rounds. Since streams are fresh, the epochs'
successes are independent, each with probability $\bar p$.

We claim that within any successful long epoch, the actual algorithm ---
which enters the epoch with whatever memory $M_0$ it accumulated --- also
reports a collision within the epoch's first $2t \le 2^e$ rounds (unless
it has terminated even earlier). Indeed, walker trajectories and timings
are memory-independent (\cref{sec:prelim}), so the actual run and the
isolated run traverse identical edges in identical rounds; the actual
memory at the end of each round is a superset of the isolated one
($M_0 \cup M_{\mathrm{iso}} \supseteq M_{\mathrm{iso}}$); and the
reporting condition --- two edges of the memory into one vertex --- is
monotone under taking supersets. (The ``no collision'' exit cannot fire
first on an input with a collision: a memory containing all $n$ edges
contains a collision witness, which is checked first.)

Let $E$ be the first successful epoch among $e^*, e^*+1, \dots$; then
$\Aall$ terminates within $\sum_{e \le E} 2^e < 2^{E+1}$ rounds, and
$E = e^* + m$ with probability at most $(1 - \bar p)^m \bar p$. Hence,
using $2(1 - \bar p) < 1$,
\[
\Ex[\text{rounds}]
\le \sum_{m \ge 0} (1 - \bar p)^m \bar p \cdot 2^{e^* + m + 1}
= 2^{e^*+1} \bar p \sum_{m \ge 0} \bigl( 2(1 - \bar p) \bigr)^m
= \frac{2^{e^*+1} \bar p}{2\bar p - 1}
\le \frac{8t}{2\bar p - 1},
\]
since $2^{e^*} < 4t$. Each round costs at most $\log n + 2$ queries.
\end{proof}

\subsection{From the good event to instance optimality}
\label{subsec:master}

We can now assemble the pieces. The next proposition isolates the role of
the good event: any threshold up to which $G$ can be guaranteed with
probability $4/5$ yields instance optimality on the corresponding family
of inputs. It will be applied twice, with the first-moment bound of
\cref{subsec:goodevent} and with the sharper bound of
\cref{sec:sharpening}.

\begin{proposition}[Master proposition]
\label{prop:master}
Let $T = T(n)$ satisfy $1 \le T(n) \le n$, and suppose that for every
input $f \colon [n] \to [n]$ and every budget $1 \le t \le T(n)$, the
good event of \cref{subsec:boxes} satisfies $\Pr[G] \ge 4/5$. Then for
every $f \in \mathcal{F}_{T(n)/20}$ and every Las Vegas algorithm $A'$
for collision detection,
\[
\Queries_{\Aall}(f) \;\le\; 400\, (\log n + 2) \cdot
\max_{\pi} \Queries_{A'}(f^{\pi}).
\]
\end{proposition}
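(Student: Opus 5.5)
Fix $f \in \mathcal{F}_{T(n)/20}$ and a Las Vegas algorithm $A'$. If $\max_\pi \Queries_{A'}(f^\pi) \ge n/400$, the claim is immediate: by \cref{rem:termination}, $\Aall$ terminates after finitely many epochs and always within the first epoch of length at least $n$. A direct geometric bound then gives at most $O(n)$ rounds, hence $O(n\log n)$ queries. I would check that the constant $400$ absorbs this case. The crude bound $\sum_{e:2^e<2n}2^e\le 4n$ rounds suffices, since $4n(\log n+2)\le 400(\log n+2)\cdot n/400\cdot 4$ is off by a factor of $4$. If that fails, I would instead handle this regime through the main case by taking a suitable competitor. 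The cleaner route is to avoid the split altogether, as described next.

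Main case. Set $q' := \max_\pi \Queries_{A'}(f^\pi)$. Since $\OPT(f)\le T/20<n$, \cref{obs:no-collision-witness} shows that $f$ has a collision. Instead of choosing $A^*$ with cost at most $2\OPT(f)$, I would run the setup of \cref{subsec:setup} with a competitor built from $\min$ of $A'$ and a near-optimal algorithm. Concretely, let $A^*$ be $A'$ if $q'\le T/20$, and otherwise a Las Vegas algorithm with $\max_\pi \Queries_{A^*}(f^\pi)\le 2\OPT(f)\le T/10$. In both cases $\max_\pi\Queries_{A^*}(f^\pi)\le \min(2q', T/10)$ up to the factor $2$. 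Normalize $A^*$ via \cref{cor:normalform} to an oldest-first label-oblivious $A$ with $q^*\le \min(2q',T/10)$. By \eqref{eq:tbound}, the $9/10$-quantile budget satisfies $t\le 10q^*\le \min(20q', T)$, so the hypothesis gives $\Pr[G]\ge 4/5$.

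Probability estimate. $A$ outputs a collision within $t$ queries with probability at least $9/10$. By \cref{cor:transfer}, on the intersection of that event with $G$, the isolated epoch run reports a collision within $2t$ rounds. The coupling makes $R$ uniform and the isolated run a function of $R$ alone, and $G$ depends only on $R$. A union bound therefore gives $\bar p\ge 1-1/10-1/5=7/10$. Then \cref{prop:expectation} gives at most $(\log n+2)\cdot 8t/(2\cdot 0.7-1)=20t(\log n+2)$ expected queries, and $t\le 20q'$ yields $400(\log n+2)\,q'$. This matches the constant and explains it, which confirms that the intended route is: choose the competitor so that $t\le 20q'$ and $t\le T$ simultaneously.

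Main obstacle. The delicate point is ensuring $t\le T$ while comparing to $A'$ rather than to $\OPT$. When $q'$ is large, comparing against $\OPT$ instead is fine, because $20\cdot 2\OPT\le$ well, $t\le 10\cdot 2\OPT(f)\le T$ and $t\le 20\OPT\le 20q'$. So I would always take $A^*$ near-optimal, with cost at most $2\OPT(f)$, which is exactly the setup. Then $t\le 20\OPT(f)\le\min(T,20q')$ because $\OPT(f)\le q'$, and no case split is needed. I also need $t\ge1$, which is trivial. Finally, I would verify that the coupled scan really produces the correct distribution of $A$, so that the $9/10$ probability transfers under the coupling; this is \cref{subsec:setup}.
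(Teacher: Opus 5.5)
Your final argument is exactly the paper's proof. You always take $A^*$ with $\max_\pi \Queries_{A^*}(f^\pi) \le 2\OPT(f)$, so that $t \le 20\OPT(f) \le \min(T, 20q')$, then get $\bar p \ge 9/10 - 1/5 = 7/10$ from \cref{cor:transfer}, and conclude with \cref{prop:expectation} as $20t(\log n+2) \le 400(\log n+2)\,q'$. The opening case split, whose arithmetic (as you noticed) is off by a factor of $4$, and the intermediate ``min'' competitor are unnecessary and should simply be deleted from a write-up.
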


\begin{proof}
Fix $f \in \mathcal{F}_{T(n)/20}$ and a competitor $A'$; then
$\OPT(f) \le T(n)/20 < n$, so $f$ contains a collision
(\cref{obs:no-collision-witness}). Construct $A$, $q^*$ and $t = t(A)$ as
in \cref{subsec:setup}; by \eqref{eq:tbound},
$t \le 20\,\OPT(f) \le T(n)$, so $\Pr[G] \ge 4/5$ by hypothesis.

Let $\bar p$ be as in \cref{prop:expectation}. The event that $A$ outputs
a collision within $t$ queries has probability at least $9/10$ (by the
definition of $t$), over the joint distribution of the stream and $A$'s
remaining coins; intersecting it with $G$ (a stream-measurable event) and
applying \cref{cor:transfer} pointwise, every outcome in the intersection
has a stream on which the isolated epoch run reports a collision within
$2t$ rounds. Hence
\[
\bar p \;\ge\; \Pr\bigl[ A \text{ succeeds within } t \text{ queries} \bigr]
- \Pr[\neg G] \;\ge\; \frac{9}{10} - \frac{1}{5} \;=\; \frac{7}{10}.
\]
By \cref{prop:expectation},
\[
\Queries_{\Aall}(f)
\;\le\; (\log n + 2) \cdot \frac{8t}{2 \cdot \frac{7}{10} - 1}
\;=\; 20\, t (\log n + 2).
\]
Combining with
$t \le 10 q^* \le 20\,\OPT(f) \le 20 \max_\pi \Queries_{A'}(f^\pi)$ from
\eqref{eq:tbound} completes the proof.
\end{proof}

\subsection{A first bound on the good event}
\label{subsec:goodevent}

For $0 \le i \le \lfloor \log n \rfloor$ let
$\Type_i := \{v \in [n] : \type(v) = i\}$ and let
\[
N_i := \#\{ j \in [t] : \type(r_j) = i \}
\]
count the stream positions of type $i$.

\begin{definition}[Overpopulation]
\label{def:overpopulation}
Type $i$ is \emph{overpopulated} if
$N_i > 10 \log(2n) \cdot t |\Type_i| / n$.
\end{definition}

Since $\Ex[N_i] = t |\Type_i| / n$, Markov's inequality gives
$\Pr[\text{type $i$ overpopulated}] \le \frac{1}{10 \log(2n)}$, and as
the number of types is at most
$\lfloor \log n \rfloor + 1 \le \log(2n)$, a union bound gives
\begin{equation}
\label{eq:overpopulation}
\Pr[\text{some type is overpopulated}] \;\le\; \frac{1}{10}.
\end{equation}

\begin{lemma}[Good event, first-moment bound]
\label{lem:goodevent-warm}
If $t \le \sqrt{n}/(15 \log n)$ and $n \ge 2^{24}$, then
$\Pr[G] \ge 4/5$.
\end{lemma}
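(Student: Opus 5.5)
The plan is a first-moment bound. We condition on the type vector and restrict to the event that no type is overpopulated, which by \eqref{eq:overpopulation} costs probability at most $1/10$. It then suffices to show $\Pr[\neg G \wedge \text{no type overpopulated}] \le 1/10$. We bound this by summing, over ordered pairs of positions $(k,j')$ with $k\ne j'$ in $[t]$, the probability that $k\in B_i$ for some $i$ and $r_{j'}$ lies on the first $2^{i_k+1}$ steps of the walk from $r_k$. It is enough to use $i=i_k$, since the protected region of $k$ at scale $i\le i_k$ is contained in its region at scale $i_k$. The obstacle announced in the overview is that box membership depends on the types of all roots, so we cannot treat the protected region as fixed.

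To handle this, we expose the randomness in two layers. First, we condition on the multiset of types, namely on $(\type(r_1),\dots,\type(r_t))$. Given this vector, the boxes and hence every $i_k$ are deterministic. Moreover, each $r_k$ is uniform on $\Type_{\type(r_k)}$ minus the previously used vertices, with the roots forming a uniformly random injective assignment within each type class. Fix $k$ with $\type(r_k)=a$ and $i_k=i\le a$, and fix $j'$ with $\type(r_{j'})=b$. The number of pairs $(u,w)$ with $u\in\Type_a$, $w\in\Type_b$ and $w$ among the first $2^{i+1}$ steps from $u$ is at most $|\Type_a|\cdot 2^{i+1}$. Sampling without replacement then gives
\[
\Pr\bigl[r_{j'}\in r_k+[0,2^{i+1}]\bigr]\lesssim \frac{2^{i+1}}{|\Type_b|-t},
\]
roughly. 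Summing over $j'$ of type $b$ produces a factor $N_b/|\Type_b|$, which on the non-overpopulated event is at most $10\log(2n)\,t/n$. This is where the $\log n$ loss appears. Summing over the at most $\log(2n)$ values of $b$ would cost another $\log n$, so we do not sum that way. Instead, $\sum_b N_b/|\Type_b|$ should be bounded directly. The cleaner route is to use the symmetric version: bound by $|\text{protected region}|\cdot\max_b N_b/|\Type_b|$ after noting that the protected region of $k$ meets each type class.

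The precise accounting we aim for is as follows. Let $P$ be the protected multiset $\bigcup_i\bigcup_{k\in B_i}\{r_k+_f\ell:\ell\le 2^{i+1}\}$, of size $\sum_i\lceil t/2^i\rceil(2^{i+1}+1)=O(t\log t)$. The expected number of pairs $(j',w)$ with $w\in P$, $r_{j'}=w$ and $j'$ not the owner of $w$ is at most $\sum_{w}\Pr[r_{j'}=w]$, summed over the relevant $j'$. Conditioned on types, and given the non-overpopulation bound, this is at most $|P|\cdot 10\log(2n)\,t/(n-t)$, up to the small correction from sampling without replacement. This is $O(t^2\log t\log n/n)$. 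With $t\le\sqrt n/(15\log n)$ the quantity is at most about $10\cdot 3\cdot t^2\log^2 n/n\le 30/225<1/10$ when $n\ge 2^{24}$, after tracking constants. A union bound with \eqref{eq:overpopulation} then gives $\Pr[\neg G]\le 1/5$.

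The main obstacle is justifying the conditional probability bound for a root $r_{j'}$ landing on a specific protected vertex $w$ when $w$ itself depends on $r_k$. We plan to handle this by conditioning on $r_k$ as well: given the type vector and $r_k$, the vertex $r_{j'}$ is uniform over its type class minus at most $t$ used vertices. The probability that it equals any fixed one of the $2^{i+1}+1$ protected vertices is then at most $1/(|\Type_b|-t)$. We then replace $|\Type_b|-t$ by $|\Type_b|/2$, which requires $|\Type_b|\ge 2t$. Any type with $N_b\ge 1$ satisfies this on the non-overpopulated event, since there $|\Type_b|\ge N_b n/(10t\log 2n)\gg t$.
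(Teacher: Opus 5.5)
Your plan is the paper's argument. You condition on the type vector $\tau$ and discard overpopulation at cost $1/10$. You then condition on $r_k$ so that each protected vertex $w=r_k+_f\ell$ is fixed, and charge $w$ the expected number $N_h/|\Type_h|$ of type-$h$ roots landing on it. Finally you union bound over the at most $2t(\log t+8)$ protected vertices.

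The gap is in the closing arithmetic. Because the lemma has explicit constants, the proof as written does not establish it. First, $30/225=2/15$ is not below $1/10$. Second, replacing $|\Type_b|-t$ by $|\Type_b|/2$ doubles the true bound. Using $|P|\le 2t(\log t+8)\le\frac53 t\log n$ and $\log(2n)\le\frac{25}{24}\log n$ (valid for $n\ge 2^{24}$), the paper gets
\[
\Pr[\neg G\mid\tau]\le 20\cdot\frac{25}{24}\cdot\frac{5}{6}\cdot\frac{1}{225}<0.08 .
\]
Your factor $2$ turns this into about $0.154$. The $1/10+1/10$ split then no longer gives $\Pr[G]\ge 4/5$.

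The fix is to use the exact conditional law instead of ``uniform over its type class minus at most $t$ used vertices''. Given $(\tau,r_k)$, the roots at type-$h$ positions other than $k$ form a uniform sample without replacement from $\Type_h$, or from $\Type_h\setminus\{r_k\}$ if $\type(r_k)=h$. So a fixed $w\neq r_k$ of type $h$ is hit with probability exactly $N_h/|\Type_h|$, or $(N_h-1)/(|\Type_h|-1)\le N_h/|\Type_h|$. There is no $-t$ correction and no factor $2$.

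Alternatively, you can keep $|\Type_b|-t$. On the non-overpopulated event, any type with $N_b\ge1$ satisfies
\[
t/|\Type_b|\le 10t^2\log(2n)/n=O(1/\log n),
\]
so the correction factor is $1+o(1)$, not $2$. Either way the bound drops below $1/10$ and the proof closes as in the paper.
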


\begin{proof}
Condition on the type vector
$\tau = (\type(r_1), \dots, \type(r_t))$ and assume no type is
overpopulated under $\tau$; by \eqref{eq:overpopulation} this
conditioning costs probability at most $1/10$. Conditioned on $\tau$,
the positions of each type form a uniform ordered sample without
replacement from that type class, and every box $B_i=B_i(\tau)$ is
determined by $\tau$.

Enumerate the coincidence events: for $0 \le i \le I$,
$k \in B_i(\tau)$ and $0 \le \ell \le 2^{i+1}$, let $E_{i,k,\ell}$ be the
event that $r_{j'} = r_k +_f  \ell$ for some $j' \in [t] \setminus \{k\}$;
then $\neg G = \bigcup E_{i,k,\ell}$. Fix one event and further condition
on $r_k$, so that $w:=r_k+_f \ell$ is fixed, and put $h:=\type(w)$. If
$w=r_k$, then $E_{i,k,\ell}$ is impossible because the roots are
distinct. Otherwise, if $\type(r_k)\ne h$, the probability that $w$
appears in one of the $N_h$ positions of type $h$ is $N_h/|\Type_h|$.
If $\type(r_k)=h$ and $|\Type_h|>1$, this probability is
$(N_h-1)/(|\Type_h|-1)\le N_h/|\Type_h|$; when
$|\Type_h|=1$, the case $w\ne r_k$ is impossible. Hence, by
non-overpopulation,
\[
\Pr[ E_{i,k,\ell} \mid \tau, r_k ]
\le \frac{N_h}{|\Type_h|}
\le \frac{10 \log(2n)\, t}{n}.
\]
The number of events is at most (using $|B_i| \le t/2^i + 1$,
$2^I \le 2t$, and $I + 1 \le \log t + 2 \le 2t$)
\[
\sum_{i=0}^{I} |B_i| \bigl( 2^{i+1} + 1 \bigr)
\le \sum_{i=0}^{I} \Bigl( 2t + \frac{t}{2^i} + 2^{i+1} + 1 \Bigr)
\le 2t(I+1) + 2t + 8t + (I+1)
\le 2t \bigl( \log t + 8 \bigr).
\]
Combining, for every non-overpopulated $\tau$,
\[
\Pr[\neg G \mid \tau]
\le 2t (\log t + 8) \cdot \frac{10 \log(2n)\, t}{n}
= \frac{20\, t^2 \log(2n) (\log t + 8)}{n}.
\]
For $t \le \sqrt{n}/(15 \log n)$ and $n \ge 2^{24}$ we have
$\log t \le \frac{1}{2} \log n$, hence
$\log t + 8 \le (\frac{1}{2} + \frac{1}{3}) \log n = \frac{5}{6} \log n$
(as $8 \le \frac{1}{3} \log n$), and
$\log(2n) \le \frac{25}{24} \log n$; so
\[
\Pr[\neg G \mid \tau]
\le \frac{20 \cdot \frac{25}{24} \cdot \frac{5}{6} \log^2 n}{225 \log^2 n}
\le \frac{1}{10}.
\]
Adding the overpopulation probability \eqref{eq:overpopulation},
$\Pr[\neg G] \le 1/10 + 1/10 = 1/5$.
\end{proof}

\subsection{Proof of the warm-up theorem}
\label{subsec:together}

\begin{proof}[Proof of \cref{thm:warmup}]
Set $c_0 := 1/300$ and $T_0(n) := \sqrt{n}/(15 \log n)$, so that
$\Feasywarm = \mathcal{F}_{T_0(n)/20}$. For $n < 2^{24}$ the claim holds
by adjusting the hidden constant in the $O(\log n)$ factor: by
\cref{rem:termination}, $\Queries_{\Aall}(f)$ is bounded by a constant
depending only on $n$, while
$\max_\pi \Queries_{A'}(f^\pi) \ge \OPT(f) \ge 1$ for every competitor
$A'$ (an algorithm making no queries cannot be correct on all inputs).
For $n \ge 2^{24}$, \cref{lem:goodevent-warm} verifies the hypothesis of
\cref{prop:master} for the threshold $T_0$, and the proposition gives
$\Queries_{\Aall}(f) \le 400 (\log n + 2) \cdot
\max_\pi \Queries_{A'}(f^\pi)$ for every $f \in \Feasywarm$ and every
Las Vegas competitor $A'$, which is the claimed $O(\log n)$-instance
optimality. (We made no attempt to optimize the constants.)
\end{proof}

\section{Sharpening the threshold}
\label{sec:sharpening}

Everything in \cref{sec:proof} except \cref{lem:goodevent-warm} is
insensitive to the exact bound on $t$: the coupling, the exploration and
domination lemmas, and \cref{prop:master} use only the definition of the
good event, not its probability. To raise the threshold of
\cref{thm:warmup} we prove a stronger bound on
$\Pr[G]$.

We start with the proof intuition.  For a scale $i$, let
\[
 P_i(u):=\{u,u+_f 1,\ldots,u+_f 2^{i+1}\}
\]
be the prefix protected at $u$.  Once $u$ is fixed, a new uniform root
hits $P_i(u)$ with probability at most $|P_i(u)|/(n-t+1)$.  Thus exposing
the roots one by one easily controls the event that a new root enters a
previously exposed walk.

The opposite direction has no such bound.  If a vertex $v$ was exposed
earlier and the new root is $u$, then the relevant quantity is the number
of starting vertices whose walk reaches $v$.  This reverse neighborhood
can be arbitrarily large: if $f(x)=v$ for every $x$, every starting
vertex reaches $v$ in one step.
To circumvent this asymmetry, for each
possible target $r_{j}$, we first reveal all other randomness in our structure, and then reveal $r_{j}$ last and ask what is the probability that it breaks the good event. The key idea is that the size of the danger set can be suitably bounded, whereas the location of $r_j$ is uniformly random, which bounds the probability of $r_j$ to fall within the danger set. 
The proof below formalizes this argument.

\begin{lemma}[Good event, sharpened]
\label{lem:goodevent-sharp}
If $t \le \sqrt{n/(60 \log n)}$ and $n \ge 2^{12}$, then
$\Pr[G] \ge 9/10$.
\end{lemma}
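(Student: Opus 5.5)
The plan is to make the surrogate-box idea from the overview rigorous. Every failure of $G$ is a directed pair: some root $r_{j}$ lies on the protected prefix $P_i(r_k)$ for some $k\in B_i$ with $k\neq j$. So $\neg G\subseteq\bigcup_{j\in[t]} F_j$, where $F_j$ is the event that $r_j\in \bigcup_{i\le I}\bigcup_{k\in B_i\setminus\{j\}} P_i(r_k)$. I will bound $\Pr[F_j]$ separately for each $j$ and then take a union bound over $j$. The point is that $F_j$ only asks whether the \emph{target} $r_j$ falls into a region, and the randomness of the target is easy to control.

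Fix $j$ and condition on the ordered tuple $(r_{j'})_{j'\in[t]\setminus\{j\}}$. Given these, $r_j$ is uniform on the $n-t+1$ vertices not among them. The difficulty is that the boxes $B_i$ depend on $\type(r_j)$, which has not yet been revealed. To remove this dependence, I define surrogate boxes. Delete position $j$ from $[t]$, and for each scale $i$ let $B_i^{(j)}$ be the first $\lceil t/2^i\rceil$ remaining positions $k$, in stream order, with $\type(r_k)\ge i$.

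The key claim is that $B_i\setminus\{j\}\subseteq B_i^{(j)}$. Take $k\in B_i$ with $k\neq j$. Then fewer than $t/2^i$ earlier positions have type at least $i$. Removing position $j$ can only decrease the count of earlier type-$\ge i$ positions among the remaining ones. Hence $k$ ranks among the first $\lceil t/2^i\rceil$ remaining type-$\ge i$ positions. This monotonicity is the main point to verify carefully, including the ceiling and the case $j<k$ versus $j>k$. It is purely combinatorial, so I expect it to be routine.

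Consequently $F_j\subseteq\{r_j\in D_j\}$, where
\[
D_j:=\bigcup_{i\le I}\bigcup_{k\in B_i^{(j)}} P_i(r_k).
\]
This danger set is determined by the other roots and $f$, so it is fixed before $r_j$ is exposed. Its size satisfies
\[
|D_j|\le\sum_{i=0}^{I}\lceil t/2^i\rceil(2^{i+1}+1).
\]
By the same calculation as in \cref{lem:goodevent-warm}, this is at most $2t(\log t+8)$. Therefore
\[
\Pr[F_j]\le \frac{2t(\log t+8)}{n-t+1},
\]
and the union bound over $j$ gives
\[
\Pr[\neg G]\le\frac{2t^2(\log t+8)}{n-t+1}.
\]

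Finally I plug in the numbers. With $t\le\sqrt{n/(60\log n)}$ we have $t^2\le n/(60\log n)$ and $\log t\le\tfrac12\log n$. For $n\ge2^{12}$ we also have $8\le\tfrac23\log n$, so $\log t+8\le\tfrac76\log n$. Moreover $t\le\sqrt n$ gives $n-t+1\ge n/1.1$ or better. This yields
\[
\Pr[\neg G]\le \frac{2\cdot\tfrac76\cdot1.1}{60}<\frac1{10},
\]
as required. I do not expect a genuine obstacle. The only delicate points are the surrogate-box containment and checking that no event is missed: the case $\ell=0$ is vacuous because roots are distinct, and each failure is charged to its target $j$.
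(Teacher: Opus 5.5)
Your proposal is correct and is essentially the paper's proof. Both fix the target index $j$ and expose the other roots. Both replace $B_i$ by surrogate boxes that skip position $j$, using that deleting a position only lowers ranks, so $B_i\setminus\{j\}$ lies in the surrogate. Both bound the resulting fixed danger set by $2t(\log t+8)$ and union-bound over $j$, using that $r_j$ is uniform on the $n-t+1$ unused vertices. The only difference is the final arithmetic: you use $n-t+1\ge n/1.1$ and $\log t+8\le\frac{7}{6}\log n$, while the paper uses $n-t+1\ge n/2$ and $\log t+8\le\frac{3}{2}\log n$. Both routes yield $\Pr[\neg G]\le 1/10$.
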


\begin{proof}
{\revisioncolor
Fix a possible target index $j$ and expose all roots except $r_{j}$.
At scale $i$, call a position $i$-long if its walk survives for at least
$2^i$ steps.  Scan the positions in their original order, skip $j$, and
take the first $\lceil t/2^i\rceil$ $i$-long positions (or all of them if
there are fewer).  Denote this surrogate box by
$\widetilde B_i^{(-j)}$.  It is determined by the exposed roots.

The original $B_i$ consists of the first $\lceil t/2^i\rceil$ $i$-long
positions without skipping $j$.  Deleting a position can only decrease
the ranks of the remaining positions, and therefore
\[
 B_i\setminus\{j\}\subseteq\widetilde B_i^{(-j)}.
\]

Take the same protected prefixes as in the definition of $G$ and let
\[
\widetilde{S}^{(-j)} := \Bigl\{ r_k +_f  \ell : 0 \le i \le I,\
k \in \widetilde{B}^{(-j)}_i,\ 0 \le \ell \le 2^{i+1} \Bigr\},
\]
which is fixed before $r_{j}$ is exposed.  Its size is at most
\[
 |\widetilde S^{(-j)}|
 \le \sum_{i=0}^{I}
       \Bigl\lceil\frac{t}{2^i}\Bigr\rceil(2^{i+1}+1)
 \le 2t(\log t+8).
\]
Indeed, after using $\lceil t/2^i\rceil\le t/2^i+1$, the leading product
contributes $2t$ at each scale, and the remaining geometric and rounding
terms contribute at most $12t$ in total.

If $G$ fails with $r_{j}$ as the target, then for some $i$, some
$k\in B_i\setminus\{j\}$, and some $0\le\ell\le2^{i+1}$, we have
$r_{j}=r_k+_f \ell$.  The inclusion above puts $k$ in the surrogate box,
so $r_{j}\in\widetilde S^{(-j)}$.

Now expose $r_{j}$.  Conditioned on the other roots, it is uniform among
the $n-t+1$ unused vertices.  Since the surrogate set is already fixed,
\[
 \Pr\bigl[r_{j}\in\widetilde S^{(-j)}
           \mid \text{the other roots}\bigr]
 \le \frac{2t(\log t+8)}{n-t+1}.
\]
Every failure has some root as its target.  A union bound over the $t$
possible target indices thus gives
\[
\Pr[\neg G]
\le\frac{2t^2(\log t+8)}{n-t+1}.
\]

Finally, the assumptions give $t\le n/2$ and
$t\le\sqrt{n/(60\log n)}$. Moreover,
$\log t\le\frac12\log n$, and $n\ge2^{12}$ gives $8\le\log n$.
Consequently $\log t+8\le\frac32\log n$, and
\[
\Pr[\neg G] \le \frac{4t^2(\log t+8)}{n}
\le \frac{6t^2\log n}{n}
\le \frac{6\log n}{60\log n}=\frac{1}{10}. \qedhere
\]
}
\end{proof}

The main theorem follows by the same assembly as before, with the
improved lemma in place of \cref{lem:goodevent-warm}.

\begin{proof}[Proof of \cref{thm:main}]
Set $c := 1/160$ and $T_1(n) := \sqrt{n/(60 \log n)}$. Since
\[
20c\sqrt{n/\log n}
= \sqrt{n/(64\log n)}
\le T_1(n),
\]
we have
$\Feasy \subseteq \mathcal{F}_{T_1(n)/20}$. For $n < 2^{12}$ the claim
holds by adjusting the hidden constant, exactly as in the proof of
\cref{thm:warmup}. For $n \ge 2^{12}$, \cref{lem:goodevent-sharp}
verifies the hypothesis of \cref{prop:master} for the threshold $T_1$
(with room to spare: $\Pr[G] \ge 9/10 \ge 4/5$), and the proposition
gives
$\Queries_{\Aall}(f) \le 400 (\log n + 2) \cdot
\max_\pi \Queries_{A'}(f^\pi)$ for every
$f \in \mathcal{F}_{T_1(n)/20} \supseteq \Feasy$ and every Las Vegas
competitor $A'$.
\end{proof}

\section{Open Problems}
\label{sec:questions}

\paragraph{The collision conjecture and the merging barrier.}
\Cref{thm:main} proves \cref{conjecture:BGN24} for $\Feasy$; proving the $O(\log n)$-instance optimality in the rest of the range, including above the birthday threshold, remains open. We note that this situation is similar to that of claw detection in graphs \cite{BGN24}. For both collisions and claws, the conjecture is known to be true only in the setting where merging between different walks provably cannot happen. (Recall that two walks merge if the head of one of the walks reaches the tail of the other, making them effectively concatenate into a single, longer walk.)

Unfortunately, current proof approaches in unlabeled instance optimality, including the use of the ``oldest-first principle'' in our proof, crucially rely on having no merges at all throughout the algorithm's run. Merges are common and sometimes unavoidable above the birthday threshold: making $q = \Omega(\sqrt{n})$ random fresh queries in a set of size $n$ will produce $\Omega(q^2 / n)$ merges, which breaks down current techniques. Thus, proving the conjecture in full generality requires techniques that are able to handle merging walks. We leave this as an intriguing open problem.

\paragraph{Single-scale optimality.}
\Cref{thm:main} leaves open a stronger and arguably more interesting
possibility: that for every input there is a \emph{single} scale $i$ such
that the lone walker $W_i$ is $O(1)$-competitive with the structure-aware
optimum on that input, with $\Aall$ then paying only the $O(\log n)$
overhead of running all scales side by side. That is, we ask the following.

\begin{center}
\begin{quote}
\emph{
Is there an absolute constant $C$ such that for every $f$ with a collision
there is a scale $i$ for which $W_i$ alone, run on its own stream, finds
a collision within $C \cdot \OPT(f)$ queries with probability at least
$2/3$?
}
\end{quote}
\end{center}

If true, the logarithmic
factor in \cref{thm:main} would be exactly the price of
\emph{universality} --- the cost of not knowing which scale fits the
instance --- rather than a loss inherent in walk-based algorithms.
As the $\Omega(\log n)$ lower bound for instance optimality is known to apply in our regime $\Feasy$ \cite{BGN24}, a positive answer to the
single-scale question would identify this overhead precisely with the
cost of running all possible scales.

Our proof does not give this. The domination lemma
(\cref{lem:domination}) matches each edge that the competitor discovers
to the walker at the matching scale, and different edges --- including
the two edges of a single collision witness --- may be matched to
different walkers. One might hope to use only the largest relevant scale,
since a walker with a larger cap traverses everything a smaller one does;
but it also spends more time per starting vertex: on stream positions
whose walks survive long (large $\rho$-value), a large-scale walker pays
up to its full cap before moving on, and the factor-two accounting of
\cref{lem:domination} breaks. A candidate hard instance would combine
components at multiple well-separated scales, so that the witness needs both,
with many long-surviving vertices that slow the larger scale down.
Whether such an instance defeats every single scale by an $\omega(1)$
factor, or whether some averaging argument always identifies one good
scale, is left open.

\paragraph{Beyond collisions and beyond unlabeled instance optimality.}
Which other search problems admit similar guarantees? \cite{BGN24}
show that $3$-way collision detection and fixed-point detection are
polynomially far from instance optimal, under the unlabeled definition of instance optimality used here (and in \cite{GrossmanKN20,BGN24,HsiangL23}). For problems such as $3$-collision and fixed-point detecion, it would be interesting to
identify structural conditions or alternative, weaker notions of instance optimality under which a universal algorithm is
near-optimal below the corresponding random-instance threshold.

\section*{AI Disclosure}
The mathematical content, including all proofs, was first conceived and written by the human authors, without the use of LLMs; but see comment about Section 5 below. After that we used LLMs, including Anthropic Claude Fable 5 and OpenAI ChatGPT Sol 5.6, for extensive editorial purposes throughout the text. They were used for text editing, polishing and clarifying of entire sections throughout the paper, but were instructed to keep the human-generated mathematical core and ideas. The LLMs were also tasked with drafting the first version of the introduction (which was then heavily edited by the human authors), and for literature review. 

Along the way, an LLM (Claude Fable 5) found an inaccuracy in our use of randomness in the proof of Section $5$ (improvement from $\sqrt{n} / \log n$ to $\sqrt{n / \log n}$) and suggested a correction. Thus, the credit for all mathematical ideas is due to the human authors, except for Section $5$ where both the human authors and the LLM contributed meaningfully. The human authors take full responsibility for all parts of the manuscript.

\bibliographystyle{plain}
\bibliography{references}

\appendix

\section{Deferred proofs from \texorpdfstring{\cref{sec:standard}}{Section 3}}
\label{app:reduction}

\begin{proof}[Proof of \cref{lem:relabel}]
$A''$ samples a uniformly random permutation $\pi \in S_n$ lazily and
simulates $A'$ on the input
$f^\pi=\pi^{-1}\circ f\circ\pi$. When $A'$ queries a label $x$,
$A''$ queries $f$ at $\pi(x)$, obtains $v=f(\pi(x))$, and returns
$\pi^{-1}(v)$ to $A'$. The values of $\pi$ and $\pi^{-1}$ are assigned
lazily whenever a new label appears. This is a faithful simulation of
$A'$ on $f^\pi$, using exactly one oracle query per simulated query.
Correctness transfers: $x \neq y$ is a collision of $f^\pi$ if and only
if $\pi(x) \neq \pi(y)$ is a collision of $f$, and $A''$ translates the
witness (or the answer ``no collision'') accordingly. Hence $A''$ is Las
Vegas and
$\Queries_{A''}(f) = \Ex_\pi \Queries_{A'}(f^\pi)
\le \max_\pi \Queries_{A'}(f^\pi)$.

Finally, $A''$ is label-oblivious: for any fixed unlabeled transcript,
the labels that $A''$ observes are, by the uniformity of $\pi$, a
uniformly random injective assignment of concrete names to the
placeholders, independent of the history of $A'$'s decisions; formally,
the distribution of $A''$'s next simulated operation given its unlabeled
transcript $H$ is the same for every labeling of $H$, so $A''$ can be
implemented as a randomized function of $H$ alone.
\end{proof}

\begin{proof}[Proof of \cref{lem:uniform}]
We induct on the number of queries. The claim is trivial for the empty
transcript. Condition on the current unlabeled transcript $H$ and on the
coins of $A$. The next abstract query is then fixed.

First suppose that the queried label has not appeared before. Introduce
a new placeholder $x$. Under the lazy random relabeling, $e(x)$ is
uniform among $[n]\setminus e(H)$. Equivalently, before revealing the
answer, the current embedding is uniform over all injective extensions
of the old embedding to $x$: every pair consisting of an old consistent
embedding and an unused image for $x$ has the same probability.

It remains to reveal the answer to the query, whether $x$ is new or was
already present. The answer is $f(e(x))$, a deterministic function of
the extended embedding. For each old placeholder $y$, the outcome
``the answer is $y$'' restricts the embeddings to
\[
E_H^y:=\{e:f(e(x))=e(y)\}.
\]
Conditioning a uniform distribution on this set leaves it uniform. For
the outcome ``new vertex'', introduce a new placeholder $z$. The map
\[
e\longmapsto e\cup\{z\mapsto f(e(x))\}
\]
is a bijection between the embeddings producing a new answer and the
consistent embeddings of the new transcript $H'$. Hence the new
embedding is uniform on $E_{H'}$ in every case.
\end{proof}

\begin{proof}[Proof of \cref{lem:standard}]
$A^\circ$ keeps an internal \emph{virtual transcript} distributed as the
transcript of $A$, together with a correspondence from virtual
components to the components it has actually explored. The
correspondence preserves the known directed graph and, for every
unmerged open path, its length. The virtual transcript includes the full
chronology, so all history-dependent decisions of $A$ can be reproduced
exactly.

Suppose first that the virtual execution starts a new root. Then
$A^\circ$ also queries an unseen label, and the outcome is transported
to the virtual transcript through the correspondence. Suppose next that
the virtual execution queries the endpoint of a path $P$. If the current
transcript is already merged, $A^\circ$ simply queries the corresponding
actual endpoint. If it is unmerged, let $\ell$ be the length of $P$ and
let $P^\circ$ be the oldest actual open path of length $\ell$. Before the
query, update the internal correspondence by mapping $P$ to $P^\circ$
and permuting the other equal-length paths accordingly. The actual query
then extends $P^\circ$, as required.

It remains to justify that this remapping does not change the law of the
answer. The explored edge constraints on two unmerged paths of the same
length are identical. Swapping their images is therefore a bijection of
the consistent embeddings. By \cref{lem:uniform}, these embeddings are
uniform even after conditioning on the full chronological transcript.
Consequently, querying the remapped path has the same distribution over
a new answer or a coincidence with any existing placeholder. We
transport the outcome back to the virtual transcript and update the
correspondence. This maintains the invariant by induction, including
when the query closes a path, creates a witness-free merge, or finds a
collision.

Thus the virtual transcript has exactly the law of $A$'s transcript,
while the actual execution is oldest-first whenever it is unmerged. The
two algorithms make the same number of queries and return the same answer
under the coupling, proving the lemma.
\end{proof}

\end{document}